\newtheorem{st}{Proposition}
\newtheorem{defin}{Definition}
\newtheorem{lemma}{Lemma}
\newtheorem{rem}{Remark}
\newtheorem{nota}{Notation}
\title{{\bf Multiplicity-free $U_q(sl_N)$ 6-j symbols: \\ 
	relations, asymptotics, symmetries} \vspace{.5cm}}
\author{{\bf Victor Alekseev$^{a,b,c}$\thanks{alekseev.va@phystech.edu},
		Andrey Morozov$^{a,b,c}$\thanks{Andrey.Morozov@itep.ru},
		Alexey Sleptsov$^{a,b,c}$\thanks{sleptsov@itep.ru}}\date{ }}
\begin{document}
\maketitle

	\vspace{-7cm}
\begin{center}
	\hfill	ITEP-TH-25/19\\
	\hfill	IITP-TH-17/19\\
	\hfill	MIPT-TH-15/19
\end{center}
\vspace{5.5cm}

\vspace{-1cm}
\begin{center}
	$^a$ {\small {\it Institute for Theoretical and Experimental Physics, Moscow 117218, Russia}}\\
	$^b$ {\small {\it Institute for Information Transmission Problems, Moscow 127994, Russia}}\\
	$^c$ {\small {\it Moscow Institute of Physics and Technology, Dolgoprudny 141701, Russia}}\\
\end{center}
\vspace{0.1cm}

\begin{abstract}
A closed form expression for multiplicity-free  quantum 6-j symbols (MFS) was proposed in \cite{MFS} for symmetric representations of $U_q(sl_N)$, which are the simplest class of multiplicity-free representations. In this paper we rewrite this expression in terms of q-hypergeometric series ${}_4\Phi_3$. We claim that it is possible to express any MFS through the 6-j symbol for  $U_q(sl_2)$ with a certain factor. It gives us a universal tool for the extension of various properties of the quantum 6-j symbols for $U_q(sl_2)$ to the MFS. We demonstrate this idea by deriving the asymptotics of the MFS in terms of associated tetrahedron for classical algebra $U(sl_N)$.
	
Next we study MFS symmetries using known hypergeometric identities such as argument permutations and Sears' transformation. We describe symmetry groups of MFS. As a result we get new symmetries, which are a generalization of the tetrahedral symmetries and the Regge symmetries for $N=2$.
\end{abstract}


\tableofcontents

\section{Introduction}

Racah-Wigner coefficients or 6-j symbols play an important role in mathematics and theoretical physics, because they appear in many different problems. From mathematical point of view they describe the associativity data, which are still unknown for $U_q(sl_N)$. The main difficulty is in the appearance of the so-called multiplicities, which happens when the algebra rank $N$ is greater than 2. However, even for multiplicity-free representations analytical formulas for 6j-symbols are known only for a small class of representations, namely, for symmetric representations. 

In theoretical physics the algebra $U_q(sl_N)$ is very important especially in quantum physics. Here is an incomplete list of topics, in which 6-j symbols of quantum Lie algebra $U_q(sl_N)$ or its classical version $U(sl_N)$, appear:\\
$\bullet$ quantum mechanics \cite{LL} and quantum computing \cite{qcomp},\\
$\bullet$ quantum $\mathcal{R}$-matrices and integrable systems \cite{Rint},\\
$\bullet$ WZW conformal field theory and 3d Chern-Simons theory \cite{WZW1, WZW2},\\
$\bullet$  lattice gauge theory \cite{lattice},\\
$\bullet$ 3-d quantum gravity \cite{qgrav},\\
$\bullet$ quantum $sl_N$ invariants of knots \cite{RT},\\
$\bullet$ Turaev-Viro invariants of 3-manifolds and topological field theory \cite{TV1,TV2},\\
$\bullet$ Drinfeld associator and Kontsevich integral \cite{DA, KI},\\
$\bullet$ orthogonal polynomials \cite{ortpol, ortpol2, ortpol3}.

One can see that 6-j symbols are widely used in both classical and modern works. Note that in many situations, e.g. in the quantum gravity or in statistical models, one considers partition functions, which contain a sum over all possible 6-j symbols of the given gauge group. In such problems it would be very useful to use symmetries between different 6-j symbols in order to reduce the sum and simplify the computation. 


Quantum 6-j symbols have a lot of symmetries, most of them are still unknown.  Nowadays we have different situations for $U_q(sl_2)$ and more general $U_q(sl_N)$ 6-j symbols. All symmetries of $U_q(sl_2)$ 6-j symbols are well known and well studied, many interesting and surprising results are obtained, see e.g. \cite{Roberts, Boalch, brehamet2015regge, sleptsov_new_sym}. In the present paper we are interested in the so-called \textit{linear} symmetries. \textit{Non-linear} symmetries (e.g. the pentagon relation), that are more complicated, are out of the scope of this paper. Linear symmetries of $U_q(sl_2)$ Racah coefficients include Regge symmetries, the tetrahedral symmetries and transformation $q \leftrightarrow q^{-1}$ \cite{klimyk}.  {\it Known} symmetries of $U_q(sl_N)$ include complex conjugation, a $q\leftrightarrow {q^{-1}}$ and the tetrahedral symmetries \cite{WZW2}. 


Some symmetries may be obtained with the help of the eigenvalue hypothesis \cite{NewSymsFromEvHyp, Mironov:2016, cabling, Dhara:2017ukv, Alekseev:2019} including some generalization for Regge symmetries. It says that the Racah matrices are uniquely defined by the eigenvalues of the $\hat{\mathcal{R}}$-matrices. All studied examples says that it is true and this hypothesis becomes a useful tool to derive symmetries. Moreover, there is an exact expression for the Racah matrices through the $\hat{\mathcal{R}}$-matrix eigenvalues for the matrices of the size up to $5\times 5$ \cite{Ev_Hyp} and $6\times 6$ \cite{Universality}.

The 6-j symbols calculation is a big problem for $U_q(sl_N)$ representations. There are few calculation methods and each of them is extremely tedious. Unlike the $U_q(sl_2)$ case, where the answer is known in a closed form for each representation \cite{KR}, the analytical expression for arbitrary representations is still unknown. However, for the special case of  symmetric and conjugated symmetric $U_q(sl_N)$ representations, the analytical expression was proposed recently \cite{MFS, Mironov:2014}. The result gives us plenty of new questions. In particular, which properties of the expression are special for $U_q(sl_2)$ and which can be generalized to the more complex cases. For instance, in this context it was found \cite{racah_pol} that 6-j symbols for symmetric representations of $U_q(sl_N)$ can be expressed in terms of orthogonal q-Racah polynomials as well as their counterpart for $U_q(sl_2)$. Also note that 6j-symbols of $U_q(sl_N)$ for non-symmetric representations were studied in \cite{Morozov:2019haw,Morozov:2019jqp,Morozov:2019kgx}. 

\bigskip

In this paper we study the analytical expression from \cite{MFS} in order to find new symmetries. In section \ref{S2} we start by introducing Racah coefficients and 6-j symbols for $U_q(sl_N)$. In this paper we consider 6-j symbols that have only symmetric and conjugate to symmetric representations. All these 6-j symbols may be transformed via tetrahedral symmetries  into either type I and type II \cite{WZW2}. For type I the only conjugate to symmetric representation is the second one, for type II -- the third one. Each type can be considered as a natural generalization of $U_q(sl_2)$ 6-j symbols because each tensor product decomposition for this case has no multiplicities and can be enumerated by an integer number rather than a whole Young diagram. We consider the expression for both types as an analytic function and study its special properties to obtain new symmetries. 

In section \ref{S3} we simplify the expression. Firstly, we prove that the expression may be reduced and the series became much more similar to $U_q(sl_2)$ series. This was done for both types independently and as it appears they can be represented as one universal expression for both types. Then we express it in terms of q-hypergeometric function $_4\Phi_3$ with some factor. Also it is proven that this expression does not have any inequality restrictions on its arguments, as it was proposed in the original article. As a result, the expression becomes more convenient for studying symmetries.

In section \ref{S4} we analyze the hypergeometric expression of multiplicity-free 6-j symbol. We find the transformation between the multiplicity-free $U_q(sl_N)$ 6-j symbol and its $U_q(sl_2)$ counterpart. This result creates a lot of possibilities to generalize well-known $U_q(sl_2)$ 6-j symbol properties to the considered case. As an immediate output of such relation in section \ref{S5} we derive the classical ($q=1$) 6-j symbol asymptotics, using known results for $U(sl_2)$. Originally it was written in terms of the associated tetrahedron \cite{Ponzano_Regge, Roberts}. The $U(sl_N)$ generalization modifies the expression so that the tetrahedron now depends on $N$ and deforms differently for two types of 6-j symbols.

In section \ref{S6} the resulting 6-j symbol expression has been studied for symmetries. Obtained $_4\Phi_3$ series has two known symmetries: permutations of arguments in each row and the Sears' transformation \cite{gaspar}. The total number of hypergeometric symmetries is 23040 for both types, it was obtained by manual computations on computer. However, only 24 form symmetry group of 6-j symbols for type I and 12 for type II. Some of them are tetrahedral, others can be described as the Regge symmetry generalization for $N\geq 2$. 

We also consider additional symmetries that equates $U_q(sl_N)$ and $U_q(sl_M)$ 6-j symbols in subsections \ref{SS4},\ref{SS5}. Being obtained as symmetries between hypergeometric series, they require a normalizing factor in terms of 6-j symbols. Non-trivial expressions are found for both types and examples are provided. The main results of these subsections are symmetries that generalize permutation in a different from tetrahedral way. They become usual well-known symmetries when $N=2$, but for $N>2$ they depend on $N$ explicitly.

\section{Racah coefficients, 6-j symbols and types I, II expression}\label{S2}

To define 6-j symbols we need firstly to remind the Racah matrix definition. Here we work with q-deformed algebra $U_q(sl_N)$.
Let us consider 3 irreducible $\mathbb{C}$-modules of representations $R_1,R_2,R_3$ acting in $V_{R_1},V_{R_2},V_{R_3}$. Due to a tensor product associativity, $(V_{R_1} \otimes V_{R_2}) \otimes V_{R_3} = V_{R_1} \otimes (V_{R_2} \otimes V_{R_3})$, hence there is a unitary transformation
\begin{align}
U:\ \ (R_1 \otimes R_2) \otimes R_3 &\rightarrow R_1 \otimes (R_2 \otimes R_3).
\end{align}
On the other hand, we can rewrite it in irreducible components, where $M_{X}^{R_1,R_2}$ is a multiplicity space of all $X$'s in the decomposition $R_1\otimes R_2$:
\begin{equation}
\begin{split}
(R_1 \otimes R_2) \otimes R_3 &= \left(\bigoplus_i M_{X_i}^{R_1,R_2} \otimes X_i\right)\otimes R_3 = \bigoplus_{i,k} M_{X_i}^{R_1,R_2} \otimes M_{R_{4_k}}^{X_i,R_3} \otimes R_{4_k},\\
R_1 \otimes (R_2 \otimes R_3) &= R_1 \otimes \left(\bigoplus_j M_{Y_j}^{R_2,R_3} \otimes Y_j\right) = \bigoplus_{j,k} M_{R_{4_k}}^{R_1,Y_j} \otimes M_{Y_j}^{R_2,R_3} \otimes R_{4_k}.
\end{split}
\end{equation}

If we consider some particular $R_4$ in the decomposition, it corresponds to the vector space of representations. A basis constructed from the highest weights' vectors differs for these two fusions.

\tikzset{->-/.style={decoration={
			markings,
			mark=at position .5 with {\arrow{stealth}}},postaction={decorate}}}
\begin{center}
\begin{tikzpicture}[scale=1]
\draw [->-, very thick](1.5, 0) arc (0:-90:0.75);
\draw [->-, very thick](0, 0) arc (-180:-90:0.75);
\draw [->-, very thick](3.25, -0.75) arc (-90:0:0.75);
\draw [->-, very thick](2.5, 0) arc (-180:-90:0.75);
\draw [->-, very thick](0.75, -0.75) arc (-180:0:1.25);
\draw node at (0, 0.3) {$R_1$};
\draw node at (1.5, 0.3) {$R_2$};
\draw node at (2.5, 0.3) {$R_3$};
\draw node at (4, 0.3) {$R_4$};
\draw node at (2, -1.6) {$X_i$};
\draw[->, thick] (4.5,-1) to (5.5,-1);
\draw node at (5, -0.7) {$U$};
\draw [->-, very thick](9, 0) arc (0:-90:1);
\draw [->-, very thick](7, 0) arc (-180:-90:1);
\draw [->-, very thick](8, -2) arc (-90:0:2);
\draw [->-, very thick](6, 0) arc (-180:-90:2);
\draw [->-, very thick] (8,-1) to (8,-2);
\draw node at (6, 0.3) {$R_1$};
\draw node at (7, 0.3) {$R_2$};
\draw node at (9, 0.3) {$R_3$};
\draw node at (10, 0.3) {$R_4$};
\draw node at (8.4, -1.5) {$Y_i$};
\end{tikzpicture}
\end{center}
Thus, there is a transformation between two vector spaces that is defined by the  Racah matrix or Racah-Wigner 6-j symbols.
\begin{defin}
	Racah coefficients are elements of Racah matrix that is the map:
	\begin{align}\label{U_mat_def}
	U \left( \begin{matrix}
	R_1 & R_2 \\
	R_3 & {R_4}
	\end{matrix} \right): \bigoplus_{i} M_{X_i}^{R_1,R_2} \otimes M_{R_{4}}^{X_i,R_3} \rightarrow \bigoplus_{j} M_{R_{4}}^{R_1,Y_j} \otimes M_{Y_j}^{R_2,R_3}.
	\end{align}
\end{defin}
\begin{defin}
	Wigner 6-j symbol is the element of a normalized Racah matrix:
	\begin{align}
	\left\lbrace \begin{matrix}\label{6j_def}
	R_1 & R_2 & X_i\\
	R_3 & R_4 & Y_j
	\end{matrix} \right\rbrace =  \frac{1}{\sqrt{\dim_q(X_i)\dim_q(Y_j)}} U_{i,j} \left( \begin{matrix}
	R_1 & R_2 \\
	R_3 & R_4
	\end{matrix} \right).
	\end{align}
\end{defin}
Here $\dim_q$ means the quantum deformation of the usual expression for the dimension of the representation \cite{3SB}. It can be computed for every $U_q(sl_N)$ representation $R$ using the corresponding Young diagram $\lambda$ ($\lambda^T$ is a transposed Young diagram):
\begin{equation}
\dim_q(\lambda)=  \prod_{(i,j)\in \lambda} \frac{q^{\frac{1}{2}(N+i-j)} - q^{-\frac{1} {2} (N+i-j)}} {q^{\frac{1}{2}(\lambda_i-i+\lambda_j^T-j+1)} - q^{-\frac{1}{2}(\lambda_i-i+\lambda_j^T-j+1)}}.
\end{equation}

In this paper we work with the special class of 6-j symbols, which can be seen as a natural generalization of $U_q(sl_2)$ case for $U_q(sl_N)$ 6-j symbols. The initial representations and the resulting one are either symmetric or conjugated to symmetric for this class. Further we will assume that $R_1,R_2,R_3,R_4$ representations are symmetric. Corresponding Young diagrams are $[r_1],[r_2],[r_3],[r_4]$, here $r_n$ are integers that denote numbers of boxes for $U_q(sl_N)$ symmetric representations. Conjugated Young diagram is written as $\overline{[r_n]}$ and correspond to $\overline{R}_n$.
\begin{defin}
	We shall call two 6-j symbols below type {\rm I} and type {\rm II},  $\ytableausetup{boxsize=0.6em, aligntableaux=top}
	\ytableaushort{\cdot}$ means $N-1$ vertical boxes.
	\begin{align}
	\text{I type: }\left\lbrace \begin{matrix}
	[r_1] & \overline{[r_2]} & X\\
	[r_3] & [r_4] & Y
	\end{matrix} \right\rbrace &\equiv \left\lbrace \begin{matrix}
	\ydiagram{2}\dots \ydiagram{1} & \ytableaushort{\cdot}\dots \ytableaushort{\cdot} & \ytableaushort{\cdot\cdot}\dots \ytableaushort{\cdot\none}*{2}\dots \ydiagram{1}\\
	\ydiagram{2}\dots \ydiagram{1} & \ydiagram{2}\dots \ydiagram{1} & \ytableaushort{\cdot\cdot}\dots \ytableaushort {\cdot\none}*{2} \dots \ydiagram{1}
	\end{matrix} \right\rbrace,\\
	\text{II type: }
	\left\lbrace \begin{matrix}
	[r_1] & [r_2] & X\\
	\overline{[r_3]} & [r_4] & Y
	\end{matrix} \right\rbrace &\equiv \left\lbrace \begin{matrix}
	\ydiagram{2}\dots \ydiagram{1} & \ydiagram{2}\dots \ydiagram{1} & \ydiagram{2,2}\dots \ydiagram{2,1}\dots \ydiagram{1}\\
	\ytableaushort{\cdot}\dots \ytableaushort{\cdot}& \ydiagram{2}\dots \ydiagram{1} & \ytableaushort{\cdot\cdot}\dots \ytableaushort{\cdot\none}*{2}\dots \ydiagram{1}
	\end{matrix} \right\rbrace.
	\end{align}
\end{defin}

Although arguments $R_1,R_2,R_3,R_4$ are very simple and can be parametrized by the width and $N$, the last pair of $X$ and $Y$ Young diagrams has more sophisticated expressions. There are two possible cases of tensor products: $[r_n]\otimes [r_m]$ and $[r_n]\otimes \overline{[r_m]}$. Each element in the decomposition depends on the initial pair of representations and the ordering number in the sum. From the Littlewood-Richardson rules \cite{harris} it is easy to see that the mentioned tensor products are multiplicity-free and all representations in a decomposition have different width. Similarly to $U_q(sl_2)$ case, where it is possible to enumerate diagrams by the only integer parameter $i$, for mentioned $U_q(sl_N)$ decompositions we have the enumerating parameter -- the first row length. To shorten the notation we shall write 6-j symbol of type I and type II in a more compact form. Let us denote the type by variable $T \in \{1,2\}$. Type I 6-j symbol is:
\begin{equation}\left[ \begin{matrix}\label{MFS_nota_t1}
r_1 & r_2 & i\\
r_3 & r_4 & j
\end{matrix} \right]_{1}^N := \left\lbrace \begin{matrix}
[r_1] & \overline{[r_2]} & \left[i, \dfrac{r_2-r_1+i}{2}^{N-2} \right]\\
[r_3] & [r_4] & \left[ j, \dfrac{r_2-r_3+j}{2}^{N-2} \right]
\end{matrix} \right\rbrace,
\end{equation}
 and type II:
\begin{equation}\left[ \begin{matrix}\label{MFS_nota_t2}
r_1 & r_2 & i\\
r_3 & r_4 & j
\end{matrix} \right]_{2}^N := \left\lbrace \begin{matrix}
[r_1] & [r_2] & \left[\dfrac{r_1+r_2+i}{2}, \dfrac{r_1+r_2-i}{2} \right]\\
\overline{[r_3]} & [r_4] & \left[ j, \dfrac{r_2-r_3+j}{2}^{N-2} \right]
\end{matrix} \right\rbrace,\end{equation}
where $i,j$ is defined in such a way in order to have a nice $N=2$ limit.

Let us note that the fusion rules restrictions require additional equalities:
\begin{align}\label{fusion_rules}
\begin{split}
r_1+r_3&=r_2+r_4 \hspace{5mm}\text{ for type I,}\\
r_1+r_2&=r_3+r_4 \hspace{5mm}\text{ for type II.}
\end{split}
\end{align}
\begin{defin}
	The equations (\ref{Regge}) between $U_q(sl_2)$ 6-j symbols are called Regge symmetries or Regge transformations \cite{Regge} $(\rho = \frac{r_1 + r_2 + r_3 + r_4}{2}, \rho' = \frac{r_1 + r_3 + i + j}{2}, \rho'' = \frac{r_2 + r_4 + i + j }{2})$:
	\begin{gather}
	\left\lbrace \begin{matrix}\label{Regge}
	r_1 & r_2 & i \\
	r_3 & r_4 & j\end{matrix} \right\rbrace = \left\lbrace \begin{matrix}
	\rho - r_3 & \rho - r_4 & i \\
	\rho - r_1 & \rho - r_2 & j \\
	\end{matrix} \right\rbrace = \left\lbrace \begin{matrix}
	\rho' - r_3 & r_2 & \rho' - j \\
	\rho' - r_1 & r_4 & \rho' - i \\
	\end{matrix} \right\rbrace = \left\lbrace \begin{matrix}
	r_1 & \rho'' - r_4 & \rho'' - j\\
	r_3 & \rho'' - r_2 & \rho'' - i \\
	\end{matrix} \right\rbrace = \\ = \left\lbrace \begin{matrix}
	\rho-r_3 & \rho' - r_4 & \rho'' - j\\
	\rho-r_1 & \rho' - r_2 & \rho'' - i
	\end{matrix} \right\rbrace = \left\lbrace \begin{matrix}
	\rho''-r_3 & \rho-r_4 & \rho' - j\\
	\rho''-r_1 & \rho-r_2 & \rho' - i
	\end{matrix} \right\rbrace \nonumber.
	\end{gather}
\end{defin}
\begin{defin}
	The tetrahedral symmetry is a known property of 6-j symbol to be invariant under row and column permutations \cite{WZW2} ($\lambda_i,\mu,\nu$ are arbitrary Young diagrams):
	\begin{align}\label{tetra}
	\left\{ \begin{matrix}
	\lambda_1 & \lambda_2 & \mu \\
	\lambda_3 & \lambda_4 &\nu \end{matrix} \right\}
	&= \left\{ \begin{matrix}
	\overline{\lambda_3} & \overline{\lambda_2} &\overline{\nu} \\
	\overline{\lambda_1} & \overline{\lambda_4} &\overline{\mu}\end{matrix} \right\}
	= \left\{ \begin{matrix}
	\lambda_3 & \overline{\lambda_4} & \overline{\mu} \\
	\lambda_1 & \overline{\lambda_2} & \overline{\nu}\end{matrix} \right\}    =\\
	&= \left\{ \begin{matrix}
	{\lambda_1} & \overline{\mu} & \overline{\lambda_2} \\
	\overline{\lambda}_3 & \overline{\nu} & \overline{\lambda_4}\end{matrix} \right\}
	=\left\{ \begin{matrix}
	{\lambda_2} & {\lambda_1} & {\mu} \\
	\overline{\lambda}_4 & \overline{\lambda_3} & \overline{\nu}\end{matrix} \right\}\nonumber.
	\end{align}
\end{defin}
\begin{st}
	6-j symbol in $U_q(sl_N), N>2$ with symmetric and conjugate to symmetric representations is either trivial ($X$ and $Y$ has the only possible value) or may be equated by tetrahedral symmetry and conjugation to either type {\rm I} or type {\rm II}.
\end{st}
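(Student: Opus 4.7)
I parameterise each MFS 6-j symbol by the sign pattern $(\epsilon_1,\epsilon_2,\epsilon_3,\epsilon_4)\in\{+,-\}^4$ where $\epsilon_i=+$ means $R_i$ is symmetric and $\epsilon_i=-$ means $R_i$ is conjugate-symmetric. The plan is to enumerate the orbits of these 16 patterns under the group $G$ generated by (i) full complex conjugation of all six arguments and (ii) those symmetries in (\ref{tetra}) which preserve the MFS structure, namely $T_2,T_3,T_5$. The symmetry $T_4$ is excluded because it exchanges an intermediate representation with a corner, which in general destroys the symmetric/conjugate-symmetric nature of the corners.

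The induced actions on sign patterns are
\begin{align*}
T_2&\colon(\epsilon_1,\epsilon_2,\epsilon_3,\epsilon_4)\mapsto(-\epsilon_3,-\epsilon_2,-\epsilon_1,-\epsilon_4),\\
T_3&\colon(\epsilon_1,\epsilon_2,\epsilon_3,\epsilon_4)\mapsto(\epsilon_3,-\epsilon_4,\epsilon_1,-\epsilon_2),\\
T_5&\colon(\epsilon_1,\epsilon_2,\epsilon_3,\epsilon_4)\mapsto(\epsilon_2,\epsilon_1,-\epsilon_4,-\epsilon_3),
\end{align*}
and a finite enumeration splits the 16 patterns into four $G$-orbits: an 8-element orbit containing $(+,+,+,+)$ together with all six patterns having exactly two conjugates; a 4-element Type II orbit containing $(+,+,-,+)$, $(-,+,+,+)$ and their conjugates; a 2-element Type I orbit $\{(+,-,+,+),(-,+,-,-)\}$; and a residual 2-element orbit $\{(+,+,+,-),(-,-,-,+)\}$. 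The identification of the Type I and Type II orbits with (\ref{MFS_nota_t1})--(\ref{MFS_nota_t2}) is immediate from the sign pattern of the representative.

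For the all-symmetric orbit I would argue triviality directly from Pieri's rule: $[r_1]\otimes[r_2]=\bigoplus_k[r_1+r_2-k,k]$, and $[r_1+r_2-k,k]\otimes[r_3]$ contains the one-row representation $[r_4]$ only by vacating the second row of $X$, which Pieri forbids unless $k=0$. Hence $X=[r_1+r_2]$ is unique and symmetrically $Y=[r_2+r_3]$ is unique, so the 6-j is trivial. Every other pattern in this orbit is $G$-equivalent to $(+,+,+,+)$, and hence also yields a trivial symbol.

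For the residual orbit with representative $(+,+,+,-)$ the argument is a row count: $\overline{[r_4]}=[r_4^{N-1}]$ has $N-1$ rows, while any $X\subset[r_1]\otimes[r_2]$ has at most two rows, so $X\otimes[r_3]$ has at most three rows. For $N>4$ no admissible $X$ exists and the symbol is vacuously trivial; for $N=4$ a direct Pieri computation pins $X=[r_1+r_2-r_4,r_4]$ (and similarly $Y$) uniquely; for $N=3$ the identification $\overline{[r_4]}=[r_4,r_4]$ lets one match the symbol against Type II after an auxiliary relabelling of $r_3,r_4$. The main technical obstacle is this last orbit: the row-count argument changes character with $N$, and the $N=3$ case relies on the coincidence that conjugate-symmetric $sl_3$ representations are themselves two-row diagrams of the same shape as the intermediate representations $X,Y$ appearing in a Type II symbol.
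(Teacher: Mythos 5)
Your sign-pattern orbit computation is correct and is, in substance, the same reduction the paper performs (the paper conjugates to make $R_4$ symmetric and then splits on the number of conjugated corners among $R_1,R_2,R_3$; your four orbits under conjugation and $T_2,T_3,T_5$ reproduce exactly that case split, and sending the two-conjugate patterns to $(+,+,+,+)$ by a group element rather than checking them separately is a nice economy; dropping $T_4$ is harmless since it can only refine the orbits). The genuine gap is in the residual orbit $\{(+,+,+,-),(-,-,-,+)\}$ at $N=3$. Your own enumeration shows this pattern is \emph{not} $G$-equivalent to the Type II pattern, so ``matching against Type II after an auxiliary relabelling of $r_3,r_4$'' cannot be an instance of ``equated by tetrahedral symmetry and conjugation,'' which is what the statement demands -- and you do not exhibit the claimed matching in any case. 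In fact no matching is needed: the case is simply trivial, which is the other branch of the disjunction and is how the paper disposes of it. For $N=3$ one has $\overline{[r_4]}=[r_4,r_4]$, the intermediate $X\subset[r_1]\otimes[r_2]$ is a two-row partition $[r_1+r_2-k,k]$, and the horizontal-strip conditions for $[r_4+c,r_4+c,c]\supset X$ force $r_1+r_2-k=r_4+c$ together with the box count $2r_4+3c=r_1+r_2+r_3$; this pins $c$, hence $k$, hence $X$ uniquely (and likewise $Y$), so the Racah matrix is $1\times 1$. Your $N=4$ sub-case does reach the right conclusion, since uniqueness rather than the explicit shape of $X$ is what matters there.

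A smaller oversight of the same origin sits in the all-symmetric orbit: for $N=3$ the assertion ``Pieri forbids unless $k=0$'' is false, because a column of height $3$ may be completed and deleted, so $X=[r_1+r_2-k,k]$ with $k>0$ can perfectly well produce $[r_4]$ in its product with $[r_3]$. The symbol is still trivial, but for a different reason: the fusion constraints admit exactly one such $k$ (the paper phrases this as $X=[r_1+r_2-a,a]=[b+r_4,b]$ having a unique solution), and this uniqueness has to be argued rather than read off from two-row containment. Both defects are concentrated at $N=3$, where conjugate-symmetric representations degenerate to two-row diagrams and column-removal becomes available; once these two sub-cases are replaced by the uniqueness arguments above, your proposal closes into a complete proof along the same lines as the paper's.
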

\begin{proof}
	There are only a few possible variants to write down a 6-j symbol with symmetric and conjugate to symmetric representations. By conjugation of 6-j symbol we can transform $R_4$ to a symmetric diagram. Thus, let us prove the proposition without loss of generality only for symmetric $R_4$. Let us now investigate how the first three arguments may be organized. There are four different cases that correspond to the number of conjugated representations in the product.
\begin{itemize}
	\item All three representations are conjugated.

Let us conjugate all terms in the product $\overline{[r_1]} \otimes \overline{[r_2]} \otimes \overline{[r_3]} \supset [r_4]$, so we can consider ${[r_1]} \otimes {[r_2]} \otimes {[r_3]} \supset \overline{[r_4]}$ and $N>2$. It is obvious from the fusion rules \cite{harris} that for $N>4$ it is not possible to combine the representations into a conjugated one because there are no more than 3 rows in a resulting Young diagram, whereas $\overline{[r_4]}$ has $N-1>3$ rows.

Now we need to prove that it is not possible even for $N={3,4}$. The $N=4$ case requires the rows of $R_4$ to be equal. The Littlewood-Richardson rules \cite{harris} say that the resulting diagram is constructed as the first multiplier with the second multiplier's elements but with some restrictions. For symmetric diagrams they forbid to put the new elements in one column. Hence, if we need to combine diagrams into a rectangular one, the corresponding 6-j symbol is trivial. Indeed, the only way to combine the diagrams properly is to consider them equal and to put them under each other.

Here and below we use some non-negative integer parameters $a,b,c$ that encode a Young diagram, the aim of these parameters is to specify the shape of a considered diagram.

The $N=3$ case has a $\overline{[r_4]}$ diagram that may be written as $[a,a]$. The $[a,a]$ is trivial, because there is the only diagram $X = [r_1+r_2-b,b]$ that has width $a$. Indeed, if the width is smaller, the third multiplier can not make the second row width equal to $a$, if it is greater, we can not make $R_4$ anymore.

Therefore, all $N>2$ 6-j symbols with 3 conjugated representations are trivial.

	\item All three representations are symmetric.

Obviously, if $R_1, R_2, R_3, R_4$ are symmetric in $U_q(sl_N), N>3$, then the corresponding 6-j symbol has the only $X = [r_1+r_2]$, the same for $Y$. If $N=3$, there is a possibility to make a Young diagram with columns of height $N$. However, the fusion rules restrict $X = [r_1+r_2-a, a] = [b+r_4,b]$, hence $X = [r_1+r_2+r_4, r_1+r_2-r_4]$ and this 6-j symbol is trivial.

	\item Two representations are conjugated and one is symmetric.

Note, that the multiplicity of $R_4$ in decomposition $R_1\otimes R_2\otimes R_3$ does not change under a permutation of multipliers. Hence we may always decompose the product of conjugated representations and then multiply it by the symmetric one. Without loss of generality we consider $\overline{[r_1]} \otimes \overline{[r_2]} \otimes {[r_3]}$.

Let us firstly decompose the product of conjugated representations. In general, it has the diagram $[a^{N-2},b]$, where $b\le a$. It is obtained from $[(r_1+r_2)^{N-2},r_1+r_2-c,c]$ by reducing the column of height $N$. If $N>3$, the product $[a^{N-2},b] \otimes [r_3]$ may have a symmetric diagram in the decomposition only if $a=b$, but it will be trivial because $X = [(r_3-r_4)^{N-1}]$. If $N=3$, $[a,b]\otimes [r_3]$ easily makes symmetric diagram with condition $X=[a,a+r_3-r_4]$. But we can find $a$ from the $\overline{[r_1]} \otimes \overline{[r_2]}$ decomposition and it is unique for fixed $r_1$ and $r_2$.

As a result, there are no non-trivial 6-j symbols with two conjugated symmetric representations and symmetric $R_4$.

 \item One conjugated representation.

There are three such 6-j symbols up to a conjugation:
	\begin{align}
	\left\lbrace \begin{matrix}
	\overline{[r_1]} & [r_2] & X\\
	[r_3] & [r_4] & Y
	\end{matrix} \right\rbrace,
	\left\lbrace \begin{matrix}
	[r_1] & \overline{[r_2]} & X\\
	[r_3] & [r_4] & Y
	\end{matrix} \right\rbrace,
	\left\lbrace \begin{matrix}
	[r_1] & [r_2] & X\\
	\overline{[r_3]} & [r_4] & Y
	\end{matrix} \right\rbrace.
	\end{align}
One can check that they may be nontrivial.
\end{itemize}
We can apply a tetrahedral symmetry to these 6-j symbols, in particular, row permutation of arguments $(R_1, R_2) \leftrightarrow (R_3,  R_4)$. After this transformation the first and the third 6-j symbols are swapped and the second one is invariant. Applying other symmetries, one can check that type I and type II are not equated by tetrahedral symmetries.
\end{proof}

It is worth mentioning that there are tetrahedral symmetries acting within each type. In particular, a type I 6-j symbol is still type I after row permutations and the swap of the first two columns. Type II is conserved only by the row permutation of the first two columns. These are the only tetrahedral symmetries that possible to derive if one consider symmetries of type I or type II. The others either were used earlier to transform 6-j symbol into one of the types, or transform any type into a completely different 6-j symbol, which has non-symmetric representations and much more complicated structure, so they are out of the scope of the present paper.

\bigskip

The expression for 6-j symbol of type I and II was proposed in \cite{MFS}. It may be written as follows.

{\small\begin{gather}\label{MFS}
	\left[ \begin{matrix}
	r_1 & r_2 & i\\
	r_3 & r_4 & j
	\end{matrix} \right]_{T}^N
	=\theta_N\left(r_1,r_2,i\right) \theta_N\left(r_3, r_4, i\right) \theta_N\left(r_1, r_4, j\right) \theta_N\left(r_2, r_3, j\right) [N-1]_q![N-2]_q!\sum_{z = z_{min}}^{z_{max}}\\\nonumber
	\dfrac{  (-1)^z [z+N-1]_q!\cdot A_{T,z} }{[ z-\frac{r_1+r_2+i}{2}]_q![z-\frac{r_3+r_4+i}{2}]_q! [z-\frac{r_1+r_4+j}{2}]_q! [z-\frac{r_2+r_3-j}{2}]_q! [\frac{r_1+r_2+r_3+r_4}{2}-z]_q! [\frac{i+j+r_1+r_3}{2}-z]_q! [\frac{i+j+r_2+r_4}{2}-z]_q!}, \\
	\theta_N(a,b,c)=\sqrt{\dfrac{[\frac{a+b-c}{2}]_q! [\frac{c+a-b}{2}]_q! [\frac{b+c-a}{2}]_q!}{[\frac{a+b+c}{2}+N-1]_q!}}, \qquad A_{T,z} = \left[\begin{matrix}
	\dfrac{[k+z_{min}-z]_q!}{[k+z_{min}+N-2 - z]_q!} \hspace{5mm}\text{ for type I,}\\
	\dfrac{[k-z_{max}+ z]_q!}{[k-z_{max} +N-2+ z]_q!} \hspace{5mm}\text{ for type II.}
	\end{matrix}\right.
	\end{gather}}
To write the 6-j symbol expression we use quantum numbers notations. It is by the definition $[n]_q= \frac{q^{\frac{n}{2}} -q^{-\frac{n}{2}}}{q^{\frac{1}{2}} -q^{-\frac{1}{2}}}$. Quantum generalization of factorials for non-negative integers can be written as $[n]_q! = \prod_{k=1}^{n}[k]_q$. Also $k = \frac{1}{2}\min(i-r_1+r_2, j-r_3+r_2)$ and $z_{min},z_{max}$ are defined as the smallest and the largest integers for which the summand is non-trivial , i.e. there are no factorials of negative integers. The expression differs for two types  only in the $A_{T,z}$ expression. Also the following conditions were imposed in the original paper \cite{MFS} (as we show below, they are not necessary):
\begin{equation}
\label{conds}
\left\{\begin{matrix}
0 \le r_2\le r_1 \le r_3 \\ 0 \le r_1\le r_2
\end{matrix}\right. \hspace{5mm} \left.\begin{matrix}
\text{for type I,}\\ \text{for type II.}
\end{matrix}\right.
\end{equation}

\section{Hypergeometric expression for 6-j symbols}\label{S3}

In this section we express the 6-j symbol expression in terms of basic q-hypergeometric series $_4\Phi_3$. Firstly, we define the q-hypergeometric functions and remind their symmetric properties. After this we use the inequality properties (\ref{conds}) to simplify the 6-j symbol expression. We prove with the help of tetrahedral symmetries that the 6-j symbol's domain may be extended beyond the mentioned inequalities. Then we write the obtained series as a $_4\Phi_3$ function. As a result, both types can be written as q-hypergeometric $_4\Phi_3$ series multiplied by some factor.

\subsection{q-Hypergeometric symmetries}
A q-Pochhammer symbol is defined as $(a,q)_n = \prod_{k=0}^{n-1}(1-a q^k)$.

\begin{defin}
	The q-hypergeometric series are defined as:
	\begin{align}
	_{p+1}\phi_p\left(
	\begin{matrix}
	a_1, \ldots, a_{p+1}\\
	b_1, \ldots, b_p
	\end{matrix};q,z
	\right) := \sum_{n=0}^{\infty} \dfrac{(a_1,q)_n\ldots (a_{p+1},q)_n}{(b_1,q)_n\ldots (b_p,q)_n (q,q)_n}z^n.
	\end{align}
	It can be also rewritten in a form, which is more convenient for us:
	\begin{align}\label{hyp_def}
	{}_{p+1}\Phi_p\left(
	\begin{matrix}
	a_1, \ldots, a_p, a_{p+1} \\
	b_1, \ldots, b_p
	\end{matrix};q,z
	\right) := {}_{p+1}\phi_p\left(
	\begin{matrix}
	q^{a_1}, \ldots, q^{a_p}, q^{a_{p+1}} \\
	q^{b_1}, \ldots, q^{b_p}
	\end{matrix};q,z
	\right).
	\end{align}
\end{defin}

It is far more convenient because it may be reformulated in terms of q-factorials:

\begin{align}
{}_{p+1}\Phi_p\left(
\begin{matrix}
a_1+1, \ldots, a_p+1, a_{p+1}+1 \\
b_1+1, \ldots, b_p+1
\end{matrix};q,z
\right) = \sum_{n=0}^{\infty} \dfrac{[a_1+n]_q!}{[a_1]_q!}\ldots \dfrac{[a_{p+1}+n]_q!}{[a_{p+1}]_q!} \dfrac{[b_1]_q!}{[b_1+n]_q!}\ldots \dfrac{[b_p]_q!}{[b_p+n]_q!} \dfrac{z^n}{[n]_q!}.
\end{align}
This expression evidently has the limit $\lim\limits_{q\rightarrow 1}[a]_q!= a!$, where the whole series becomes a usual hypergeometric function.

There are a lot of known symmetries for $_4\Phi_3$ series. Here we consider only permutation symmetry and Sears' transformation.
\begin{defin}
	Permutation symmetry is the evident property of ${}_r\Phi_p$ functions to be invariant under permutations $\omega \in \mathbb{S}_r$ and $u\in \mathbb{S}_p$:
	\begin{equation}\label{trans_hyp_perm}
	{}_r\Phi_p\left( \begin{matrix}
	a_1,\ldots,a_r\\
	b_1,\ldots,b_p
	\end{matrix} ;q,z\right)=
	{}_r\Phi_p\left( \begin{matrix}
	a_{\omega(1)},\ldots,a_{\omega(r)}\\
	b_{u(1)},\ldots,b_{u(p)}
	\end{matrix} ; q,z \right).
	\end{equation}
\end{defin}
\begin{defin}
	Sears' transformation \cite{gaspar} is the relation between two ${}_4\Phi_3$ functions:
	\begin{equation}\label{trans_sears}
	\begin{split}
	{}_4\Phi_3\left( \begin{matrix}
	x,y,z,n\\
	u,v,w
	\end{matrix} ;q,q\right)=
	\dfrac{[v{-}z{-}n{-}1]_q![u{-}z{-}n{-}1]_q![v{-}1]_q![u{-}1]_q!}{[v{-}z{-}1]_q![v{-}n{-}1]_q![u{-}z{-}1]_q![u{-}n{-}1]_q!}\ {}_4\Phi_3\left( \begin{matrix}
	w-x,w-y,z,n\\
	1{-}u{+}z{+}n,1{-}v{+}z{+}n,w
	\end{matrix} ; q,q \right),
	\end{split}
	\end{equation}
	where $x+y+z+n+1=u+v+w$.
\end{defin}

\subsection{6-j symbol as \texorpdfstring{$_5\Phi_4$}\ \ series}
Let us denote the sum (\ref{MFS}) as $\left[ \begin{matrix}
r_1 & r_2 & i\\
r_3 & r_4 & j
\end{matrix} \right]_{T}^N  = K'\cdot\sum_m I_m = K' \cdot I$, where $m = \frac{1}{2}(r_1+r_2+ r_3+r_4)-z$. Then it can be easily rewritten as:
{\small\begin{gather}
I = \sum_{m = m_{min}}^{m_{max}}\dfrac{ (-1)^{\frac{r_1+r_2+r_3+r_4}{2}-m} [\frac{r_1+r_2+r_3+r_4}{2}-m+N-1]_q! \cdot A_{T,m}} {[m]_q![\frac{r_3+r_4-i}{2}-m]_q! [\frac{r_1+r_2-i}{2}-m]_q! [\frac{r_2+r_3-j}{2}-m]_q! [\frac{r_1+r_4-j}{2}-m]_q! [\frac{i+j-r_2-r_4}{2}+m]_q! [\frac{i+j-r_1-r_3}{2}+m]_q! },\\
K'=\theta_N\left(r_1,r_2,i\right) \theta_N\left(r_3, r_4, i\right) \theta_N\left(r_1, r_4, j\right) \theta_N\left(r_2, r_3, j\right) [N-1]_q![N-2]_q!\ ,\\
A_{T,m} = \left[\begin{matrix}
\dfrac{[k-m_{max}+m]_q!}{[k-m_{max}+N-2 + m]_q!} \hspace{5mm}\text{ for type I,}\\
\dfrac{[k+m_{min}- m]_q!}{[k+m_{min} +N-2- m]_q!} \hspace{5mm}\text{ for type II.}
\end{matrix}\right.
\end{gather}}
The explicit relations for $m_{min}$ and $m_{max}$ can be easily found from the denominator factorials, because the summand is zero if and only if there is a negative factorial in the denominator:
\begin{equation}
m_{max} = \frac{1}{2}\min\begin{pmatrix}
r_1+r_2-i\\ r_3+r_4-i\\ r_1+r_4-j\\ r_2+r_3-j
\end{pmatrix}, \hspace{10mm} m_{min} = \frac{1}{2}\max\begin{pmatrix}
0\\ r_1+r_3-i-j\\ r_2+r_4-i-j
\end{pmatrix}.
\end{equation}
As it can be derived from fusion rules, $k,m_{max},m_{min}$ are always integers when a 6-j symbol exists. Moreover, $k$ has a clear meaning in terms of Young diagrams -- it is the minimum width among the conjugated parts of diagrams, corresponding to $X_i$ and $Y_j$.

One can notice, that the considered expression fits the $_5\Phi_4$ definition (\ref{hyp_def}), if $z=q$. This allows us to claim the following.
\begin{st}
	Both type {\rm I} and type {\rm II} may be written as $_5\Phi_4$ q-hypergeometric series multiplied by simple factors:
	{\small\begin{gather}\left[ \begin{matrix}
	r_1 & r_2 & i\\
	r_3 & r_4 & j
	\end{matrix} \right]_T^N = K'' \cdot  {}_5\Phi_4 \left( \begin{matrix}
	a_1,a_2,a_3,a_4,a_5\\
	 b_1,b_2,b_3,b_4
	\end{matrix}; q,q \right),\\
	2a_i = \left( \begin{matrix}
	2\{k-m_{max}+1 , -k - m_{min}-N+2\}_T\\
	-r_1-r_2+i \\
	 -r_3-r_4+i\\
	 -r_1-r_4+j\\
	  -r_2-r_3+j
	\end{matrix} \right),
	\qquad 2b_i = \left( \begin{matrix} 
	 -r_1-r_2-r_3-r_4-2(N-1)\\
	 i+j-r_2-r_4 + 2\\
	   i+j-r_1-r_3 + 2\\
	   2\{k-m_{max}+N-1 , -k - m_{min}\}_T \\
	\end{matrix} \right),\\
	\begin{gathered}
	K'' =  \dfrac{ K' \cdot A_{T,0} \cdot [\frac{r_1+r_2+r_3+r_4}{2}+N-1]_q! } {[\frac{r_3+r_4-i}{2}]_q! [\frac{r_1+r_2-i}{2}]_q! [\frac{r_2+r_3-j}{2}]_q! [\frac{r_1+r_4-j}{2}]_q! [\frac{i+j-r_2-r_4}{2}]_q! [\frac{i+j-r_1-r_3}{2}]_q! },
	\end{gathered}
	\end{gather}}
where $\{e_1,e_2\}_T \equiv e_T$ is $e_1$ for type {\rm I} and $e_2$ for type {\rm II}.
\end{st}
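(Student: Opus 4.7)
The identity is a direct term-by-term verification: the sum $I = \sum_m I_m$ is rewritten as $K''$ times the ${}_5\Phi_4$ series by the substitution $n=m$ on the summation index. First, I would factor out the $m=0$ summand. Combining $K'\cdot I_{m=0}$ reproduces $K''$ exactly, because the factorials of $I_0$ match those appearing in $K''$: the denominator of $I_0$ carries $[d_\ell]_q!$ with $d_1=\frac{r_3+r_4-i}{2}$, $d_2=\frac{r_1+r_2-i}{2}$, $d_3=\frac{r_2+r_3-j}{2}$, $d_4=\frac{r_1+r_4-j}{2}$, and $[e_\ell]_q!$ with $e_1=\frac{i+j-r_2-r_4}{2}$, $e_2=\frac{i+j-r_1-r_3}{2}$, while its numerator carries $[R/2+N-1]_q!$ (with $R=r_1+r_2+r_3+r_4$) together with $A_{T,0}$.

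\textbf{Step 2.} I would then match the remaining sum $\sum_n I_n/I_0$ with the ${}_5\Phi_4$ series by reading off parameters term by term. Each $m$-dependent factorial in $I_m$ supplies exactly one parameter of the hypergeometric: the four $[d_\ell-m]_q!$ in the denominator give four upper parameters with factorial indices $-d_\ell-1$ (via the reflection $[-x]_q=-[x]_q$, which converts a decreasing factorial into an increasing one); the two $[e_\ell+m]_q!$ give the two lower parameters with factorial indices $e_\ell$; the numerator $[R/2-m+N-1]_q!$ gives the remaining lower parameter with factorial index $-R/2-N$ (same reflection trick); and the two factorials constituting $A_{T,m}$ supply the remaining upper/lower pair, whose type-dependent forms produce the $\{\cdot,\cdot\}_T$ entries of the statement. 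Applying the conventional shift $+1$ from the definition (\ref{hyp_def}) to pass from factorial indices to symbol parameters reproduces precisely the lists of $a_i$ and $b_j$ claimed.

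\textbf{Main obstacle.} The delicate part is sign bookkeeping. Every use of the reflection $[-x]_q=-[x]_q$ to invert a decreasing factorial introduces a $(-1)^n$; collecting these contributions from the four $d_\ell$-denominators, the $[R/2-m+N-1]_q!$ numerator, and the non-positive-indexed factorial in $A_{T,m}$ (for type I coming from the numerator $[k-m_{max}+m]_q!$, for type II from the denominator piece), and combining them with the explicit $(-1)^{R/2-m}=(-1)^{R/2}(-1)^n$ in $I_m$, one must check that the aggregate $(-1)^n$ factors cancel, so that the resulting series is really a ${}_5\Phi_4$ evaluated at $z=+q$. Termination at $n=m_{max}-m_{min}$ is then automatic on both sides: the original summand vanishes outside $[m_{min},m_{max}]$ under the convention $1/[-k]_q!=0$ for $k>0$, while the hypergeometric terminates because the corresponding symbol parameter is a non-positive integer cutting the series at the correct index.
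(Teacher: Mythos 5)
Your proposal is correct and takes essentially the same route as the paper, whose entire proof of this proposition is the single remark that it ``can be proven straightforwardly by substitution of q-Pochhammer symbols'' --- i.e.\ exactly the term-by-term matching of the $m$-dependent factorials (via the reflection/Pochhammer conversions of the type later displayed in (\ref{poch})) that you carry out. You in fact record more of the bookkeeping than the paper does, namely the cancellation of the accumulated $(-1)^n$ factors against the explicit alternating sign and the termination of the series by a non-positive upper parameter.
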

 It can be proven straightforwardly by substitution of q-Pochhammer symbols.

\subsection{Expression of 6-j symbol as \texorpdfstring{${}_4\Phi_3$}\ \ series}

The obtained expression for 6-j symbol is not quite convenient to find its symmetries. Expressions for $k$, $m_{min}$ and $m_{max}$ may be simplified in the following way.
\begin{lemma}\label{L1}
	For all type {\rm I} 6-j symbols $k-m_{max} =\frac{i+j-r_1-r_3}{2}$ if the following conditions are satisfied:
	\begin{equation}\label{type1_cond}
	\begin{cases}
	r_2\le r_1\le r_3,\\ 
	r_1+r_3=r_2+r_4.
	\end{cases}
	\end{equation}
\end{lemma}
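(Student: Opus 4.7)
The claim reduces to a direct identification of $k$ and $m_{max}$ once the fusion rule and the hypothesis $r_2 \le r_1 \le r_3$ are used to eliminate two of the four candidates in the minimum defining $m_{max}$. The plan is therefore computational: first simplify $m_{max}$, then combine with $k$ using the identity $\max(a,b)+\min(a,b)=a+b$.

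First I would substitute $r_4 = r_1+r_3-r_2$ (the type I fusion rule from \eqref{fusion_rules}) into the four arguments of $m_{max}$, giving $r_1+r_2-i$, $2r_3-r_2+r_1-i$, $2r_1+r_3-r_2-j$, and $r_2+r_3-j$. Pairwise subtraction shows
\[
(r_3+r_4-i)-(r_1+r_2-i) = 2(r_3-r_2)\ge 0,\qquad (r_1+r_4-j)-(r_2+r_3-j)=2(r_1-r_2)\ge 0,
\]
so the second and third candidates are dominated by the first and fourth respectively. Hence $m_{max}=\tfrac{1}{2}\min(r_1+r_2-i,\,r_2+r_3-j)$.

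Next, I would rewrite $k$ and $m_{max}$ in parallel form by factoring out $r_2/2$:
\[
k = \tfrac{r_2}{2} - \tfrac{1}{2}\max(r_1-i,\,r_3-j),\qquad m_{max} = \tfrac{r_2}{2} + \tfrac{1}{2}\min(r_1-i,\,r_3-j).
\]
Subtracting and applying $\max(a,b)+\min(a,b)=a+b$ with $a=r_1-i$ and $b=r_3-j$ yields $k - m_{max} = -\tfrac{1}{2}\bigl((r_1-i)+(r_3-j)\bigr) = \tfrac{i+j-r_1-r_3}{2}$, which is the claim.

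The only non-trivial step is the reduction of $m_{max}$ to a minimum over two terms; everything else is symbolic manipulation. The hypothesis $r_2 \le r_1 \le r_3$ enters precisely to kill the two $r_4$-dependent candidates, and the $\max+\min$ identity avoids any case analysis on the sign of $(r_1-i)-(r_3-j)$. No main obstacle is expected.
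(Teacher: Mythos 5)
Your proof is correct and rests on the same two observations as the paper's: under $r_2\le r_1\le r_3$ and the fusion rule, the two $r_4$-dependent candidates never achieve $m_{max}$, and the surviving two-term minima defining $k$ and $m_{max}$ are complementary. The paper finishes by writing that complementarity as a union of two systems of inequalities whose disjunction is trivially exhaustive, whereas your use of $\max(a,b)+\min(a,b)=a+b$ with $a=r_1-i$, $b=r_3-j$ collapses the case split in one line --- a tidier but substantively identical argument.
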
\label{L2}
\begin{proof}
	Let us consider $k-m_{max}=\frac{i+j-r_1-r_3}{2}$. One can check that there are 2 cases when it is so, hence they may be written as the union of two systems:
	\begin{equation}
	\begin{cases}
	r_1+r_2-i\le r_3+r_4-i,\\
	r_1+r_2-i\le r_2+r_3-j,\\
	r_1+r_2-i\le r_1+r_4-j,\\
	j-r_3\le i-r_1;
	\end{cases} \hspace{10mm} \begin{cases}
	r_2+r_3-j\le r_3+r_4-i,\\
	r_2+r_3-j\le r_1+r_2-i,\\
	r_2+r_3-j\le r_1+r_4-j,\\
	i-r_1\le j-r_3.
	\end{cases}
	\end{equation}
	If the conditions (\ref{type1_cond}) satisfied, the first three inequalities are true. The union of these two systems may be reduced to the next expression.
	\begin{equation}
	\left[\begin{matrix}
	j-i\le r_4-r_2, \\
	j-i\ge r_4-r_2.
	\end{matrix}\right.
	\end{equation}
	Consequently, every 6-j symbol from type I is described by $k-m_{max} = \frac{i+j-r_1-r_3}{2}$.
\end{proof}
\begin{lemma}\label{L3}
	For all type {\rm II} 6-j symbols $k+m_{min} = \frac{r_1+r_2-i}{2}$  if the conditions are satisfied:
	\begin{equation}\label{17}
	\begin{cases}
	r_1\le r_2,\\ 
	r_1+r_2=r_3+r_4.
	\end{cases}
	\end{equation}
\end{lemma}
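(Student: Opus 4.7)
My plan is to follow the strategy of Lemma~\ref{L1} closely. First, I will expand the identity $k + m_{min} = \frac{r_1+r_2-i}{2}$ by doing case analysis on which argument realises the minimum in the definition of $k$ and which realises the maximum in $m_{min}$. This gives at most $2\cdot 3 = 6$ combinations; for each I will compute $k + m_{min}$ symbolically and, using the fusion identity $r_1+r_2=r_3+r_4$ (equivalently $r_4 = r_1+r_2-r_3$), isolate the combinations in which the sum equals $\frac{r_1+r_2-i}{2}$ for all admissible parameter values, as opposed to only on a boundary locus.

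Next, I will encode each such ``universal'' combination as a system of linear inequalities on $(i,j,r_1,\dots,r_4)$: the first two inequalities pin down the choice of minimiser for $k$ and maximiser for $m_{min}$, while the remaining ones impose non-negativity of the selected candidates together with their dominance over the discarded ones. A direct consequence of $r_1+r_2=r_3+r_4$ is the simplification $r_1+r_3-(r_2+r_4) = 2(r_3-r_2)$, so the dominance inequalities across the different systems are mutually complementary; taking their union therefore covers all sign regimes of $r_3-r_2$ simultaneously, just as in Lemma~\ref{L1} the union of two analogous systems covered the two signs of $j-i-(r_4-r_2)$.

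The main obstacle is the final reduction: showing that, under the single remaining hypothesis $r_1\le r_2$, the union of these systems is in fact satisfied for every admissible pair $(i,j)$, so that the resulting disjunction becomes a tautology. This is the precise analogue of the concluding step in Lemma~\ref{L1}, where the two systems collapsed to $j-i\le r_4-r_2$ or $j-i\ge r_4-r_2$. To carry this out, I will invoke the Littlewood--Richardson ranges $i\in\{|r_1-r_2|, |r_1-r_2|+2,\dots,r_1+r_2\}$ coming from $[r_1]\otimes[r_2]$ and $j\in\{|r_2-r_3|, \dots, r_2+r_3\}$ coming from $[r_2]\otimes\overline{[r_3]}$, together with the coupling compatibility $X\otimes\overline{[r_3]}\supset[r_4]$ and $[r_1]\otimes Y\supset[r_4]$. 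The hypothesis $r_1\le r_2$ removes the ambiguity $|r_1-r_2|=r_2-r_1$, and combined with the fusion bounds it forces the residual inequality separating the two systems to hold for all valid $(i,j)$, completing the proof.
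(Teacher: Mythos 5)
Your high-level plan coincides with the paper's, whose entire proof of this lemma is the sentence that it is ``analogous to type I'': a case analysis over the minimiser of $k$ and the maximiser of $m_{min}$, followed by a union-of-systems argument. The gap is precisely in the step you postpone to the end. If you actually carry out the $2\times 3$ case analysis with $2k=\min(i-r_1+r_2,\ j-r_3+r_2)$ and $2m_{min}=\max(0,\ r_1+r_3-i-j,\ r_2+r_4-i-j)$, only \emph{one} of the six combinations yields $r_1+r_2-i$ identically under $r_1+r_2=r_3+r_4$, namely $2k=j-r_3+r_2$ paired with $2m_{min}=r_1+r_3-i-j$; the other five reduce to codimension-one loci such as $i=r_1$, $i+j=r_1+r_3$ or $r_2=r_3$. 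So, unlike Lemma~\ref{L1}, there is no complementary pair of systems whose union is a tautology in the sign of $r_3-r_2$. Instead one must show that the single system $j-r_3\le i-r_1$, $i+j\le r_1+r_3$, $r_1+r_3\ge r_2+r_4$ holds for every admissible configuration, and the last inequality is equivalent (via the fusion rule) to $r_3\ge r_2$, which does not follow from $r_1\le r_2$. Concretely, $r_1=1$, $r_2=5$, $r_3=2$, $r_4=4$, $i=4$, $j=3$ satisfies your hypotheses and all coupling conditions for $N=3$ (take $X=[5,1]$, $Y=[3]$), yet $k+m_{min}=3+1=4\ne 1=\frac{r_1+r_2-i}{2}$.

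This shows that your central assertion --- that the dominance inequalities across the universal systems ``are mutually complementary'' and that the residual inequality is ``forced'' by $r_1\le r_2$ together with the Littlewood--Richardson ranges --- is not a deferred detail but the point at which the analogy with Lemma~\ref{L1} breaks: for the displayed formula for $k$ the identity simply fails on an open set of admissible parameters. Any correct proof must first settle what $k$ actually is for type~II (the paper's verbal description of $k$ as the minimal width of the conjugated parts of $X$ and $Y$ gives a different expression for type~II than the displayed formula, and with that reading the case structure changes). Note also that your fallback appeal to the coupling compatibilities $X\otimes\overline{[r_3]}\supset[r_4]$ and $[r_1]\otimes Y\supset[r_4]$ is a genuine departure from the method of Lemma~\ref{L1}, which never needs the admissible ranges of $i,j$; and you do not in fact derive any inequality from those compatibilities, you only assert the conclusion.
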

\begin{proof}
	The proof for type II is analogous to type I.
\end{proof}

\begin{lemma}
	Conditions on arguments of a 6-j symbol (\ref{conds}) are redundant, i.e the expression (\ref{MFS}) is valid even if the inequalities are not satisfied.
\end{lemma}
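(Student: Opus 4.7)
The plan is to move any tuple of arguments violating (\ref{conds}) into the admissible region by applying a type-preserving tetrahedral symmetry, then invoke the matching symmetry of the closed form on the right-hand side of (\ref{MFS}). Write $\mathcal{P}$ for the subgroup of tetrahedral symmetries fixing the type: it is generated by the row swap $(r_1,r_2,r_3,r_4)\mapsto(r_3,r_4,r_1,r_2)$ and the first-two-columns swap $(r_1,r_2,r_3,r_4)\mapsto(r_2,r_1,r_4,r_3)$ for type I, and by only the column generator for type II, as identified in the paragraph after (\ref{tetra}). A direct case analysis on the ordering of the $r_i$'s, subject to the fusion rule (\ref{fusion_rules}), shows that every admissible tuple has a representative in its $\mathcal{P}$-orbit lying inside the region cut out by (\ref{conds}).

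The invariance of the true 6-j symbol under $\mathcal{P}$ is immediate from (\ref{tetra}). The main work is to verify that the right-hand side of (\ref{MFS}), read as a purely symbolic expression in $(r_1,r_2,r_3,r_4,i,j)$ with the summation range and the quantity $k$ given by their min/max formulas, is likewise invariant under $\mathcal{P}$. The prefactor $\theta_N(r_1,r_2,i)\theta_N(r_3,r_4,i)\theta_N(r_1,r_4,j)\theta_N(r_2,r_3,j)$ is manifestly unchanged, because each $\theta_N$ is symmetric in its three entries and the four triples are paired exactly as dictated by the tetrahedral graph. After the shift $m=\tfrac12(r_1+r_2+r_3+r_4)-z$ of Section~3.2 the seven denominator factorials and the single numerator factorial get permuted among themselves by each generator of $\mathcal{P}$, so the remaining nontrivial piece is the factor $A_{T,z}$.

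Combining the two invariances: given arbitrary admissible arguments, choose $\sigma\in\mathcal{P}$ sending them into the region (\ref{conds}), where (\ref{MFS}) holds by \cite{MFS}; propagating back via the $\mathcal{P}$-invariance of both sides yields the equality at the original arguments. The main obstacle will be the covariance check of $A_{T,z}$, since $k$, $z_{\min}$, $z_{\max}$ are defined through min/max of linear combinations of the $r_i$'s (and of $i,j$): one must keep track of which term in each min/max is realized and show that after a generator of $\mathcal{P}$ the new min/max picks out the correspondingly permuted term, so that $A_{T,z}$ rewrites into the same symbolic form on both sides of the transformation. This is essentially the same bookkeeping that underlies Lemmas~\ref{L1} and~\ref{L2}, which already isolate which branch of the min/max is attained under (\ref{conds}), and a parallel branch analysis for the other orderings completes the verification.
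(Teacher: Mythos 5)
Your proposal follows essentially the same route as the paper's own proof: move an arbitrary argument tuple into the region (\ref{conds}) via the type-preserving tetrahedral symmetries, observe that these act as permutations of $(r_1,r_2,r_3,r_4,i,j)$, and verify that the right-hand side of (\ref{MFS}) is invariant under the induced permutation. Your explicit flagging of the $A_{T,z}$ covariance (through the min/max definitions of $k$, $z_{\min}$, $z_{\max}$) as the delicate step is the same bookkeeping the paper compresses into ``one can check,'' so there is no substantive difference in approach.
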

\begin{proof}
	We are able to obtain every possible 6-j symbol of types I and II by performing a tetrahedral symmetry (\ref{tetra}) that leaves the type invariant:
	\begin{gather}
	\left\{ \begin{matrix}
	[r_1] & \overline{[r_2]} & X \\
	[r_3] & [r_4] & Y \end{matrix} \right\}
	= \left\{ \begin{matrix}
	{[r_3]} & \overline{[r_2]} &\overline{Y} \\
	{[r_1]} & {[r_4]} &\overline{X}\end{matrix} \right\}
	= \left\{ \begin{matrix}
	{[r_2]} & \overline{[r_1]} & \overline{X} \\
	[r_4] & {[r_3]} & {Y}\end{matrix} \right\}.
	\end{gather}
	One may immediately notice that these symmetries may transform a 6-j symbol from region $r_2\le r_1 \le r_3$ into all possible representations. The problem is that the expression for the transformed 6-j symbols may differ from the initial expression. We can check it by substituting arguments transformed by tetrahedral symmetries. Let us show that in our notations it acts on $r_1,r_2,r_3,r_4, i, j$ as a permutation. For $R_n$, the symmetry evidently acts as a permutation of $r_n$. There are also representations $X$ and $Y$ that is conjugated, we can consider only diagram $\left[ j, \frac{r_2-r_3+j}{2}^{N-2} \right]$ as an example. Under conjugation it transforms $\left[ j, \frac{r_2-r_3+j}{2}^{N-2} \right] \rightarrow \left[ j, \frac{r_3-r_2+j}{2}^{N-2} \right]$, but the expression depends only on $j$ that is invariant under conjugation.

	Therefore, tetrahedral symmetry acts on the expression as a permutation of arguments. One can check that it is invariant under written tetrahedral symmetry transformation. The same for type II, but we need only one relation (the inequality is $r_1\le r_2$):
	\begin{align}
	\left\{ \begin{matrix}
	[r_1] & {[r_2]} & X \\
	\overline{[r_3]} & [r_4] & Y \end{matrix} \right\}
	= \left\{ \begin{matrix}
	{[r_3]} & {[r_4]} & {X} \\
	\overline{[r_1]} & {[r_2]} & {Y}\end{matrix} \right\}.
	\end{align}
	The symmetry acts non-trivially only on $r_1,r_2,r_3,r_4$, we already showed why it is a permutation. It is easy to see that the expression is invariant under such a transformation.

Therefore, the expression does not change when we write a 6-j symbol without additional inequality conditions (\ref{conds}). Then we can get rid of these conditions as even if they are not satisfied the expression is valid.
\end{proof}

We have proven in Lemma \ref{L1} that for arguments satisfying the inequality condition (\ref{conds}) there are only one combination of $k-m_{max}$ that is present for type I 6-j symbols. This results into the exact value of $A_{T,m}$ which allow us to reduce the whole series. Then we apply tetrahedral symmetries to prove that the statement is true for all type I 6-j symbols. The same procedure has been done for type II and this allows us to simplify both expressions and write down them as follows.

{\small\begin{empheq}[box=\fbox]{gather}
\left[ \begin{matrix}
r_1 & r_2 & i\\
r_3 & r_4 & j
\end{matrix} \right]_T^N = K'\sum_{m = m_{min}}^{m_{max}}\dfrac{ (-1)^{\frac{r_1+r_2+r_3+r_4}{2}-m} [\frac{r_1+r_2+r_3+r_4}{2}+N-1-m]_q!} {[m]_q![\frac{r_3+r_4-i}{2}-m]_q! [\frac{r_2+r_3-j}{2}-m]_q! [\frac{r_1+r_4-j}{2}-m]_q! [\frac{i+j-r_2-r_4}{2}+m]_q! }\times\\\nonumber
\times \dfrac{1}{[\frac{r_1+r_2-i}{2}+ (N-2)\delta_{T,2} -m]_q![\frac{i+j-r_1-r_3}{2}+ (N-2)\delta_{T,1}+m]_q!}.
\end{empheq}}
\bigskip

We can express all factorials as q-Pochhammer symbols. The substitution differs for factorials with $+m$ and $-m$:
\begin{align}\label{poch}
[m_0+m]_q! =  [m_0]_q!(q^{m_0+1},q)_{m}\cdot \dfrac{q^{-\frac{m}{4}(2m_0+m-1)}}{(1-q)^{m}}\ ,
\quad
[m_0-m]_q! = \dfrac{(-1)^{m}[m_0]_q! } {(q^{-m_0},q)_{m}} \cdot \dfrac{(1-q)^{m}} { q^{\frac{m}{4}(2m_0+m-1)}}.
\end{align}

By substituting this to the main expression one can check that among depending on $m$ terms only q-Pochhammer symbols remain. This allows us to write the series as a hypergeometric function:
\begin{align}\left[ \begin{matrix}
r_1 & r_2 & i\\
r_3 & r_4 & j
\end{matrix} \right]_T \sim  {}_4\Phi_3 \left( \begin{matrix}
a_1,a_2,a_3,a_4\\ b_1,b_2,b_3
\end{matrix}; q,q \right).
\end{align}

The ${}_4\Phi_3$ arguments may be easily obtained using (\ref{poch}). Note, that there is the following relation on the arguments:
\begin{align}\label{hyp_relation}
a_1+a_2+a_3+a_4+1=b_1 + b_2 + b_3.
\end{align}
And the factorizable part of the expression:
{\small\begin{equation}\label{coef_K}
K_T = \dfrac{\theta_N\left(r_1,r_2,i\right) \theta_N\left(r_3, r_4, i\right) \theta_N\left(r_1, r_4, j\right) \theta_N\left(r_2, r_3, j\right) [N-1]_q![N-2]_q! [\frac{r_1+r_2+r_3+r_4}{2}+N-1]_q! }{[\frac{r_3+r_4-i}{2}]_q! [\frac{r_1+r_2-i}{2} + (N-2)\delta_{T,2}]_q! [\frac{r_2+r_3-j}{2}]_q! [\frac{r_1+r_4-j}{2}]_q! [\frac{i+j-r_2-r_4}{2}]_q! [\frac{i+j-r_1-r_3}{2}+(N-2)\delta_{T,1}]_q!}.
\end{equation}}
Combing all this into one, we come to the following statement.
\begin{st}
	The considered 6-j symbol expression may me expressed as a $_4\Phi_3$ function for both types. The factor $K_T$ is as in (\ref{coef_K}).
\end{st}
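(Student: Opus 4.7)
My plan is to substitute the q-Pochhammer identities (\ref{poch}) into the boxed expression of this subsection, one for each of the eight $m$-dependent q-factorials: the seven factorials in the denominator (three of $+m$ type, including $[m]_q!$, and four of $-m$ type) and the single $-m$-type factorial in the numerator. Each application of (\ref{poch}) peels off an $m$-independent piece $[m_0]_q!$ together with a q-Pochhammer symbol $(q^{m_0+1},q)_m$ or $(q^{-m_0},q)_m$, plus auxiliary factors in $(-1)$, $(1-q)$, and half-integer powers of $q$. The collection of all $[m_0]_q!$ pieces combines with $K'$ to give the coefficient $K_T$ of (\ref{coef_K}) after a mechanical reshuffle (the numerator factorial contributes $[\frac{r_1+r_2+r_3+r_4}{2}+N-1]_q!$ to the top of $K_T$, and six of the seven denominator factorials contribute to the bottom, while $[m]_q!$ trivially contributes $[0]_q!=1$).

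The heart of the argument is verifying that the auxiliary $m$-dependent scalars collapse to the single factor $q^m$ demanded by $z=q$ in (\ref{hyp_def}). Counting exponents, the $(1-q)^m$ contributions cancel as $3-4+1=0$, and the $(-1)^m$ contributions from the four inverted $-m$ denominators together with the $-m$ numerator yield $(-1)^{5m}=(-1)^m$, which combines with the explicit $(-1)^{(r_1+r_2+r_3+r_4)/2-m}$ to leave an $m$-independent sign absorbed into $K_T$. The $q$-exponent requires more care: the $m^2$ contributions from the seven denominator factorials and one numerator factorial cancel by the pattern of $\pm 1/4$ coefficients (three $+m$ denominators give $+3/4$, four $-m$ denominators give $-1$, and the $-m$ numerator gives $+1/4$), and the linear-in-$m$ contribution reduces to exactly $m$ after computing $\sum_\mathrm{den} m_0 - m_0^\mathrm{num} = -1$, which follows upon summing the seven denominator parameters to $\tfrac{1}{2}(r_1+r_2+r_3+r_4)+(N-2)$ using $\delta_{T,1}+\delta_{T,2}=1$, against $m_0^\mathrm{num} = \tfrac{1}{2}(r_1+r_2+r_3+r_4)+(N-1)$.

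After these cancellations the surviving object is a pure Pochhammer ratio $\prod_i (q^{a_i},q)_m \big/ \big[\prod_j (q^{b_j},q)_m \cdot (q,q)_m\big]$ times $q^m$, which is identified as ${}_4\Phi_3(a_1,\ldots,a_4;b_1,b_2,b_3;q,q)$ via (\ref{hyp_def}). The four $a_i$ come from the $-m$-type denominator factorials (Pochhammer index $-m_0$), two of the $b_j$ from the non-$[m]_q!$ $+m$-type denominator factorials (index $m_0+1$), the third $b_j$ from the $-m$-type numerator factorial (index $-m_0$), and $[m]_q!$ supplies the standard $(q,q)_m$. The balance relation (\ref{hyp_relation}) is then checked by direct substitution, hinging again on $\delta_{T,1}+\delta_{T,2}=1$ to cancel the $(N-2)$ offsets between the $a_4$ and $b_2$ contributions. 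The main obstacle is the quadratic-in-$m$ cancellation of $q$-exponents in the middle paragraph; once that is pinned down, reading off the $a_i$, $b_j$, and $K_T$ is purely mechanical.
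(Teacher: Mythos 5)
Your proposal is correct and follows exactly the route the paper takes: the paper's entire argument is ``substitute (\ref{poch}) into the boxed sum and check that only q-Pochhammer symbols depend on $m$,'' and you have simply carried out that bookkeeping explicitly, with the exponent counts ($3-4+1=0$ for $(1-q)^m$, $(-1)^{5m}$ for the signs, $\sum_{\mathrm{den}}m_0-m_0^{\mathrm{num}}=-1$ for the linear $q$-power) all coming out right. The only remark worth adding is that your quadratic-in-$m$ cancellation is consistent with the corrected exponent $\frac{m}{4}(2m_0-m+3)$ in the $[m_0-m]_q!$ identity rather than the exponent printed in (\ref{poch}), so you have implicitly fixed a typo in the paper along the way.
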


\begin{empheq}[box=\fbox]{gather}
\left[ \begin{matrix}
r_1 & r_2 & i\\
r_3 & r_4 & j
\end{matrix} \right]_T^N
=K_T\cdot {}_4\Phi_3 \left( \begin{matrix}
a_1,a_2,a_3,a_4\\ b_1,b_2,b_3
\end{matrix}; q,q \right)\label{final_expr_hyp},\\\label{final_expr}
2 a_i = \left( \begin{matrix}
-r_1-r_2+i - 2(N-2)\delta_{T,2}\\ -r_3-r_4+i\\ -r_1-r_4+j\\ -r_2-r_3+j
\end{matrix} \right),
\qquad 2 b_i = \left( \begin{matrix}  -r_1-r_2-r_3-r_4-2(N-1)\\ i+j-r_2-r_4+2\\  i+j-r_1-r_3 + 2 + 2(N-2)\delta_{T,1}\\\end{matrix} \right).
\end{empheq}
\bigskip

This is the most suitable form of 6-j symbol for our aims. As it can be seen, we reduced the $_5\Phi_4$ series to the $_4\Phi_3$ one. This is a non-obvious result. In order to proceed with this reduction we used tetrahedral symmetry along with the special properties of the considered two types. Due to the fact that $U_q(sl_2)$ 6-j symbols are expressed via $_4\Phi_3$ too, we may use the same techniques to obtain new results, also the limit $N=2$ is very easy to apply. This result gives us an idea of a strong connection between 6-j symbols and q-hypergeometric series. For example, it is interesting whether all multiplicity-free 6-j symbols can be expressed as $_4\Phi_3$ series.

It is interesting to analyze the number of independent parameters in the obtained expression. Neglecting $q$, on both sides we have 7 parameters: $\{r_1,r_2,r_3,r_4,i,j,N\}$ and  $\{a_1,a_2,a_3,a_4, b_1,b_2,b_3\}$. They are not independent, it was mentioned that, on the one hand, each type has restrictions for $N>2$ that fix one parameter. On the other hand, obtained ${}_4\Phi_3$ series satisfies a balance condition $\sum_{i}a_i+1=\sum_{i}b_i$. Thus, for $N>2$ there are 6 parameters on both sides. For $N=2$, the fusion rules do not fix $r_n$, so there are 6 parameters on both sides. It is natural to ask whether there is a connection between the fusion rules and the balance condition. It seems like these equalities have different meaning, because the condition on $\{a_i,b_i\}$ is satisfied even if $r_1+r_3\neq r_2+r_4$. From this point of view another question arises: what class of 6-j symbols can be described in terms of ${}_4\Phi_3$ series with such equality? This question is out of our consideration in this paper, but it is still important and interesting to study.

\section{Relation with $U_q(sl_2)$ 6-j symbols}\label{S4}

In this section we investigate the relation between 6-j symbols in multiplicity-free $U_q(sl_N)$ and $U_q(sl_2)$  cases. As we have seen, the core of both expressions are $_4\Phi_3$ hypergeometric series. We have already mentioned the number of independent parameters in the series, but now we analyze it in details. Then we shall see the interesting connection between the usual $U_q(sl_2)$ 6-j symbol and considered one.

Let us write down the $_4\Phi_3$ arguments as a vector space with the basis $(r_1,r_2,r_3,r_4,i,j,N)$. We put all the additional constants in $\vec{C}$ since they do not play any role in the next discussion:
\begin{equation}
\begin{pmatrix}
	r_1 + r_2 + r_3 + r_4 + 2(N-1)\\
	r_1 + r_2 - i + 2(N-2)\delta_{T,2}\\
	r_3 + r_4 - i\\
	r_1 + r_4 - j\\
	r_2 + r_3 - j\\
	-r_2 - r_4 + i + j + 2\\
	i + j - r_1 - r_3 + 2(N-1)\delta_{T,1} \\
\end{pmatrix} = \begin{pmatrix}
1 & 1 & 1 & 1 & 0 & 0 & 1\\
1 & 1 & 0 & 0 & -1 & 0 & 2\delta_{T,2}\\
0 & 0 & 1 & 1 & -1 & 0 & 0\\
1 & 0 & 0 & 1 & 0 & -1 & 0\\
0 & 1 & 1 & 0 & 0 & -1 & 0\\
0 & 1 & 0 & 1 & -1 & -1 & 0\\
-1 & 0 & -1 & 0 & 1 & 1 & 2\delta_{T,1}\\
\end{pmatrix}
\begin{pmatrix}
r_1\\
r_2\\
r_3\\
r_4\\
i\\
j\\
N
\end{pmatrix}
+\vec{C}.
\end{equation}

The rank of this matrix is 6, so there is a kernel of dimension one. This kernel is described by a zero vector $\vec{v}$. Note that (\ref{hyp_relation}) is a completely different condition that does not depend on the values of parameters. The zero vector can be written as follows
\begin{align}
\label{zerotrans}
\vec{v}=\begin{cases}
\begin{pmatrix}
0,1,0,1,1,1,-1
\end{pmatrix}, \quad 	\text{Type I,}\\
\begin{pmatrix}
1,1,0,0,0,1,-1
\end{pmatrix}, \quad 	\text{Type II},
\end{cases}
\end{align}
with the corresponding shift in the parameters being
\begin{align}
\alpha\vec{v}=\begin{cases}
	\alpha\vec{v}=\alpha (r_2 + r_4 + i + j -  N), \quad 	\text{Type I,}
	\\
	\alpha\vec{v}=\alpha (r_1 + r_2 + j -  N), \quad 	\text{Type II}.
\end{cases}
\end{align}

This freedom allows to shift the arguments value without changing the actual value of the hypergeometric series, so it can be considered as a symmetry for 6-j symbol although for hypergeometric series it is tautological equality. If one examines the transformation for type I 6-j symbol, it can be seen that the fusion rules are in conflict with it. Indeed, the non-trivial transformation changes $r_2+r_4$, but leaves $r_1+r_3$ unchanged, thus (\ref{fusion_rules}) forbids such transformation for $N>2$, for either type I or type II. However for $N=2$ the fusion rules disappear and we can apply it without any problems. So we take $U_q(sl_N)$ 6-j symbol and make transformation \eqref{zerotrans} in order to get the expression for $U_q(sl_2)$ 6-j symbol:
\begin{equation}\label{Nshift}
\begin{split}
_4\Phi_3(r_1,r_2,r_3,r_4,i,j,{\bf N})_{1} \ &= \ (-1)^N \cdot {}_4\Phi_3(r_1,r_2+N{-}2,r_3,r_4+N{-}2,i+N{-}2,j+N{-}2,{\bf 2}), \\
_4\Phi_3(r_1,r_2,r_3,r_4,i,j,{\bf N})_{2} \ &= \ (-1)^N \cdot {}_4\Phi_3(r_1+N{-}2,r_2+N{-}2,r_3,r_4,i,j+N{-}2,{\bf 2}).
\end{split}
\end{equation}

The only part of expression that differs is the factor $K_T$. It partly replicates the hypergeometric arguments, so only a few terms are left in the relation between of multiplicity free $U_q(sl_N)$ 6-j symbols and $U_q(sl_2)$ ones. For the sake of brevity, we will write the hypergeometric function from (\ref{final_expr_hyp}) as $_4\Phi_3(r_1,r_2,r_3,r_4,i,j,N)_T$. The factor $K'$ changes after transformations, let us write it down explicitly. 

\begin{align}
K'(N)=&\theta_N\left(r_1,r_2,i\right) \theta_N\left(r_3, r_4, i\right) \theta_N\left(r_1, r_4, j\right) \theta_N\left(r_2, r_3, j\right) [N-1]_q![N-2]_q!\ ,\\
\Theta_T(N):=&\dfrac{1}{[N{-}1]_q! [N{-}2]_q!}\dfrac{K'(N)}{K'(2)}.
\end{align}


\begin{align}\label{coef_theta}
\Theta_1(N)&=
\left(\prod_{m=1}^{N-2} \left[\frac{i {-} r_1 {+} r_2}{2}+m\right]_q \left[\frac{j {+} r_2 {-} r_3}{2}+m\right]_q  \left[\frac{j {-} r_1 {+} r_4}{2}+m\right]_q \left[\frac{i {-} r_3 {+} r_4}{2}+m\right]_q \right)^{-\frac{1}{2}},\\
\Theta_2(N)&=
\left(\prod_{m=1}^{N-2} \left[\frac{r_1 {+} r_2{-}i}{2}+m\right]_q \left[\frac{j {+} r_2 {-} r_3}{2}+m\right]_q  \left[\frac{j {+} r_1 {-} r_4}{2}+m\right]_q \left[\frac{i {+} r_3 {+} r_4}{2}{+}1+m\right]_q \right)^{-\frac{1}{2}}.
\end{align}

The resulting relation between multiplicity-free $U_q(sl_N)$ and $U_q(sl_2)$ 6-j symbol is as follows.
\begin{empheq}[box=\fbox]{align}
\left[ \begin{matrix} \label{rollback}
r_1 & r_2 & i\\
r_3 & r_4 & j
\end{matrix} \right]_1^N
&=\left\{\begin{matrix}
r_1& r_2+N-2 &i+N-2 \\
r_3 & r_4+N-2  & j+N-2
\end{matrix} \right\}(-1)^N [N-1]_q! [N-2]_q! \cdot \Theta_1(N),\\
\left[ \begin{matrix}
r_1 & r_2 & i\\
r_3 & r_4 & j
\end{matrix} \right]_2^N
&=\left\{\begin{matrix}
r_1+N-2 & r_2+N-2 &i\\
r_3 & r_4 & j+N-2
\end{matrix} \right\}(-1)^N [N-1]_q! [N-2]_q! \cdot \Theta_2(N).
\nonumber
\end{empheq}

It can be easily checked that the remaining fusion rules for $N=2$ (triangle inequality, etc.) are always satisfied and the resulting 6-j symbol is non-trivial. On the other hand, if one tries to transform $U_q(sl_2)$ 6-j symbol into $N>2$ one, the number of problems arises and it is not possible in general. For example, if $r_1+r_3-r_2-r_4>0$, there is no corresponding $N>2$ 6-j symbol.

This result is interesting not only because it reveals the hidden relation between two classes of 6-j symbols, but additionally it can be applied to extend a lot of known properties of $U_q(sl_2)$ to arbitrary $N$. In the next section we derive the asymptotics formula for the multiplicity-free case. Let us show an example of such a generalization.

\section{Asymptotics of 6-j symbol}\label{S5}
The 6-j symbol asymptotics formula for $N=2, q=1$ was conjectured by G.Ponzano and T.Regge \cite{Ponzano_Regge} and later was proven by J. Roberts \cite{Roberts}. It is formulated in terms of tetrahedron that is combined from the edges of length $J_n:=r_n+1/2,J_5:=i+1/2,J_6:=j+1/2$ and approximates the limit $\lambda\rightarrow \infty $ for representations $\{\lambda r_n, \lambda i, \lambda j\}$: 
\begin{equation}\label{tetr_sl2}
\left\{\begin{matrix}
r_1 & r_2 &i\\
r_3 & r_4 & j
\end{matrix} \right\} \sim \dfrac{1}{\sqrt{12\pi |V(J_n)|}} \cos\left(\sum_{n=1}^{6} J_n \cdot  \Omega(J_n) + \dfrac{\pi}{4} \right),
\end{equation}
where $V$ is the tetrahedron volume, $\Omega_i$ is the external dihedral angle about the edge $J_i$.

Let us consider 6-j symbols at $q=1$. Using \eqref{rollback} we can find the asymptotics for $U(sl_N)$ 6-j symbol as an asymptotics for equal $U(sl_2)$ 6-j symbol. It looks very similar to \eqref{tetr_sl2}, but with deformed expressions for edges, volume and angles. The tetrahedron is now made of $\widetilde{J}_n$ edges, which can be found from $U(sl_N)$ $J_n$:  
\begin{equation}
\begin{cases}
\widetilde{J}_m = J_m,\\
\widetilde{J}_n = J_n+N-2,
\end{cases}
\end{equation}
where $m$ and $n$ are defined differently for two types:
\begin{align}\label{J_def}
 m \in \text{\{1, 3\}}, \quad n \in \text{\{2, 4, 5, 6\}} \qquad \text{Type I},\\
 m \in \text{\{3, 4, 5\} }, \quad  n \in \text{\{1, 2, 6\}} \qquad \text{Type II}. \nonumber
\end{align}
The corresponding volume and angles are denoted by $\widetilde{V}$ and $\widetilde{\Omega}_n$.

The resulting asymptotics for 6-j symbol corresponding to arbitrary symmetric representations of $U_q(sl_N)$, thus, can be written in terms of the associated tetrahedron, but now the tetrahedron depends on $N$:
\begin{empheq}[box=\fbox]{equation}
\dfrac{1}{\Theta_T(N)}\left[ \begin{matrix} \label{asympt}
r_1 & r_2 & i\\
r_3 & r_4 & j
\end{matrix} \right]_T^N
\sim \dfrac{(-1)^N\cdot(N-1)!\cdot(N-2)!}{\sqrt{12\pi\cdot |V(\widetilde{J}_n)|}} \cos\left(\sum_{n=1}^{6} \widetilde{J}_n \cdot \Omega(\widetilde{J}_n) + \dfrac{\pi}{4} \right).
\end{empheq}

Although the factor is quite long for the general case, it becomes much simpler when all $r_n$ coincide, for example, for type I it looks like:

\begin{equation}
\dfrac{\left(\frac{i}{2}+N{-}2\right)!}{\left(\frac{i}{2}\right)!}\dfrac{\left(\frac{j}{2}+N{-}2\right)!}{\left(\frac{j}{2}\right)!}\left[ \begin{matrix}
r & r & i\\
r & r & j
\end{matrix} \right]_{T=1}^N
\sim \dfrac{(-1)^N(N-1)!(N-2)!}{\sqrt{12\pi |V(\widetilde{J}_n)|}}  \cos\left(\sum_{i=1}^{6} \widetilde{J}_n \cdot \Omega(\widetilde{J}_n) + \dfrac{\pi}{4} \right).
\end{equation}

Let us note, that the generalized formula when all parameters of 6-j symbol are the same does not correspond to the regular tetrahedron if $N>2$. Due to this fact we can not simplify the relation further. Interestingly, the resulting tetrahedron is deformed for every type differently. In particular, type II corresponds to the trigonal  pyramid, whereas type I is a bent tetrahedron, which is combined of 4 equal isosceles triangles. 

\section{Symmetries derivation}\label{S6}
\subsection{Hypergeometric symmetries group}
In this subsection we do not write any symmetries explicitly. Here we are describing the structure of obtained symmetries. The statements in this subsection are given without analytical proof, but it has been checked manually.


We use both permutation symmetry (\ref{trans_hyp_perm}) and Sears' transformation (\ref{trans_sears}) in order to get all possible 6-j symbol transformations. The arbitrary composition of Sears' transformations and permutations can be written as:
\begin{align}\label{om_sys}
{}_4\Phi_3 \left( \begin{matrix}
a_1,a_2,a_3,a_4\\ b_1,b_2,b_3
\end{matrix}; q,q \right)= \widetilde{C}\cdot
{}_4\Phi_3 \left( \begin{matrix}
\widetilde{a}_1,\widetilde{a}_2,\widetilde{a}_3,\widetilde{a}_4\\ \widetilde{b}_1,\widetilde{b}_2,\widetilde{b}_3
\end{matrix}; q,q \right),
\end{align}
where variables with $ \ \widetilde{}\ $ denotes the resulting arguments. There is a factor $C$ that appears after Sears' transformations, but we are not interested in it for now. The resulting symmetry has the following form:
\begin{equation}
\left[ \begin{matrix}
r_1 & r_2 & i\\
r_3 & r_4 & j
\end{matrix} \right]_T^N = C\left[ \begin{matrix}
\widetilde{r}_1 & \widetilde{r}_2 & \widetilde{i}\\
\widetilde{r}_3 & \widetilde{r}_4 & \widetilde{j}
\end{matrix} \right]_T^M,
\end{equation}
where $\widetilde{r}_n,\widetilde{i},\widetilde{j}$ are some linear combination of $r_n,i,j$ obtained by the mentioned transformations. Parameters $N,M$ denote the ranks of the corresponding algebras.

To find the symmetries we have to solve the linear system of equations on arguments $\widetilde{r}_n,\widetilde{i},\widetilde{j},M$. Initially we consider $M=N$ to get a unique solution. The rank of the system is 6, because the hypergeometric function has 7 arguments with one additional constraint. Note, that we do not restrict them to the fusion rules when we solve the system. That is done because Sears' transformation do not respect the fusion rules, but some of its combinations with permutations do. Hence we need to obtain all symmetries and then recover fusion rules using (\ref{Nshift}). In this subsection we do not consider the relation (\ref{Nshift}) as a symmetry because it is used to satisfy fusion rules by fixing parameter $M$.  

\begin{st}
	The overall set of symmetries $G$ that contains all compositions of permutations and Sears' transformation is a group and it has 23040 elements in total \cite{23040}.
\end{st}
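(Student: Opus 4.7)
The plan is to treat every composition of parameter-permutations and Sears' transformations as an affine transformation on the seven-dimensional parameter tuple $(a_1,a_2,a_3,a_4,b_1,b_2,b_3)$, working modulo the balance relation $a_1+a_2+a_3+a_4+1=b_1+b_2+b_3$. Each generator is then an affine-linear map on a $6$-dimensional affine space, and the statement becomes: the subgroup of $\mathrm{Aff}(\mathbb{Q}^6)$ generated by these maps has order exactly $23040$. Representing a symmetry by the image of the generic symbolic parameter vector (with indeterminates $r_1,r_2,r_3,r_4,i,j,N$), composition becomes substitution and equality becomes equality of formal tuples, giving an inexpensive way to test identity of two group elements.

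First I would fix generators: the adjacent transpositions generating $S_4$ (on the $a_i$) and $S_3$ (on the $b_j$), which contribute the obvious subgroup of order $144$, together with Sears' transformation \eqref{trans_sears}. Every other Sears-type transformation (obtained by different choices of which parameters play the roles of $x,y,z,n,u,v,w$) is produced by conjugating this single Sears transformation by a permutation, so no further generators are needed. I would then run a breadth-first orbit enumeration: maintain a set $G$ of known elements, a queue of unprocessed elements, and at each step multiply the front of the queue by each generator, inserting any previously unseen result. Termination is automatic once no new elements appear, and at that point $G$ is closed under the generators and hence is the full group generated by permutations and Sears' transformation.

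The main obstacle is not logical but computational: the group has over twenty thousand elements, so hand-enumeration is out of the question. On a computer, however, the symbolic representation makes equality cheap and standard Dimino-style closure enumeration finishes quickly. After enumeration one reads off $|G|=23040$; as a sanity check, $|G|/|S_4\times S_3|=160$ is the number of cosets of the permutation subgroup, which matches the count of essentially distinct Sears-type three-term transformations known from the classical theory of balanced $_4\Phi_3$ series. This reproduces the value reported in \cite{23040}.
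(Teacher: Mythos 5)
Your proposal is correct and follows essentially the same route as the paper: the authors also establish the count $|G|=23040$ by programming the permutations and Sears' transformation explicitly and running a computer-aided closure enumeration until no new elements appear, then noting invertibility to conclude the set is a group. Your affine-map formalization and the coset count $23040/144=160$ are just a cleaner packaging of the same computation.
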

For $N=M=2$ case this group was discovered in \cite{23040_Doyle}, where it was called 22.5K group. They claimed that it is in fact Coxeter group $D6$, which arises in hyperbolic geometry as the group of hyperbolic tetrahedra symmetries. The volume of a hyperbolic tetrahedron is known to be connected with the quantum 6-j symbol of $U_q(sl_2)$ in an appropriate limit \cite{Murakami_Yano}.

Our result was obtained via the computer algebra system. Permutations and Sears' transformations were programmed explicitly and combined multiple times. By fixing all the constraints on permutations and Sears' transformation, the program reached 23040 elements. It was checked that they are closed under composition. Each symmetry is non-degenerate due to the non-degeneracy of the initial equations, hence all elements are invertible. As a result, 23040 symmetries including identity form a group.

Most of these symmetries cannot be applied in 6-j symbols because they often do not preserve the positiveness of $r_n,i,j$. Also the structure of its subgroups is not clear and it makes the analysis more complicated. Thus, we are interested only in the subgroup that generalizes the known set of symmetries from $N=M=2$ to arbitrary $N$ and $M$, let us call it $S\subset G$. There are 144 elements in $S$ and it is analogous to the $U_q(sl_2)$ group of permutations and Regge transformations, which we denote as $H=S\big|_{N=M=2}\big.$. Moreover, these groups are in one to one correspondence: each symmetry for $N\neq 2 \neq M$ may be transformed to a $N=2=M$ symmetry and vice versa. Note, that the found symmetries from $S$ are well-defined for hypergeometric series, but for 6-j symbols they require the positiveness of $r_n,i,j,M-2$. 

The other symmetries from $G$ are out of our scope in the next discussion. The reformulation of symmetries from $G$ in terms of 6-j symbol have some difficulties. On the one hand, the number of group elements is too large to analyze the symmetries manually, on the other hand the subgroups structure is still unclear. Also there are a lot of symmetries that do not preserve the positiveness of representation parameters, so a lot of symmetries can not be applied to 6-j symbols. Interestingly, the whole group may be obtained as a combination of symmetries $S$ and the following one:
\begin{equation}
\left[ \begin{matrix}
r_1 & r_2 & i\\
r_3 & r_4 & j
\end{matrix} \right]_T^N = \left[ \begin{matrix}
r_1 & r_2 & i\\
-r_3-1 & r_4 & j
\end{matrix} \right]_T^N.
\end{equation}


\bigskip

After the transformation (\ref{Nshift}) is used to find $M$, it is natural to consider two classes of symmetries: one for $N=M$ and another for $N\neq M$.

	\begin{defin}
	If the symmetry requires $N=M$, we call it the internal one, else we call it the external symmetry.  The set of internal  and external symmetries are denoted by $I$ and $E$ respectively.
\end{defin}

Let us provide this definition with examples of both internal and external symmetries.

The internal symmetry:
\begin{equation}
\left[ \begin{matrix}
r_1 & r_2 & i\\
r_3 & r_4 & j
\end{matrix} \right]_2^N = \left[ \begin{matrix}
r_2 & r_1 & i\\
r_4 & r_3 & j
\end{matrix} \right]_2^N.
\end{equation}
The fusion rules (\ref{fusion_rules}) formally require two equalities for LHS and RHS. However, they are linearly dependent in this case, so the equality for one side yields the equality for the other side. Moreover, if $N\neq M\neq 2$, the conditions are in contradiction. 

The external symmetry:
\begin{equation}
\left[ \begin{matrix}
r_1 & r_2 & i\\
r_3 & r_4 & j
\end{matrix} \right]_1^N = C\left[ \begin{matrix}
r_1 & i+N-M & r_2+N-M\\
r_3 & j+N-M & r_4+N-M
\end{matrix} \right]_1^M,
\end{equation}
where $C$ is some factor. Here we have to restrict representations by two equalities: $r_1+r_3= r_2+r_4$ and $r_1+r_3 = i + j+2(N-M)$, so we should fix $2M=2N+i+j-r_1-r_3$:
\begin{equation}
\left[ \begin{matrix}
r_1 & r_2 & i\\
r_3 & r_4 & j
\end{matrix} \right]_1^N = C\left[ \begin{matrix}
r_1 & \dfrac{r_2+r_4+i-j}{2} & \dfrac{3r_2+r_4-i-j}{2}\\
r_3 & \dfrac{r_2+r_4-i+j}{2} & \dfrac{r_2+3r_4-i-j}{2}
\end{matrix} \right]_1^{N+\frac{i+j-r_2-r_4}{2}}.
\end{equation}
Parameters of the transformed 6-j symbol on the RHS have to be non-negative. Parameters $\widetilde{r}_n,\widetilde{i},\widetilde{j}$ are non-negative for each external symmetry, as it will be derived in Appendix.  On the other hand, $M$ still have to be grater then or equal to 2, so not all 6-j symbols may be transformed by this symmetry. Each external symmetry induces a subset of 6-j symbols that has such a relation.

\begin{st}\label{st5}
	For any non-trivial 6-j symbol $\left[ \begin{matrix}
	r_1 & r_2 & i\\
	r_3 & r_4 & j
	\end{matrix} \right]_T^N$ the external symmetry of any type transforms it into the 6-j symbol with non-negative $\widetilde{r}_n,\widetilde{i},\widetilde{j}$.
\end{st}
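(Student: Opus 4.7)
The plan is to reduce Proposition~\ref{st5} to a finite case verification by (i)~distilling non-triviality into explicit inequalities and (ii)~using the rollback (\ref{rollback}) to push the question down to $N=M=2$, where it is classical. A 6-j symbol $\left[\begin{matrix} r_1 & r_2 & i\\ r_3 & r_4 & j\end{matrix}\right]_T^N$ is non-trivial precisely when $m_{min}\le m_{max}$ in (\ref{MFS}); reading off the seven denominator factorials, this is equivalent to the six triangle-type inequalities
\begin{equation*}
r_1+r_2\ge i,\ \ r_3+r_4\ge i,\ \ r_1+r_4\ge j,\ \ r_2+r_3\ge j,\ \ i+j\ge r_1+r_3,\ \ i+j\ge r_2+r_4,
\end{equation*}
together with $r_n,i,j\ge 0$ and the type-dependent fusion rule (\ref{fusion_rules}). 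These cut out a rational polytope $\mathcal{P}$, the admissible input region.

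Next I would note that each element of $E\subset S$ is a composition of permutations (\ref{trans_hyp_perm}) and Sears' transformations (\ref{trans_sears}), followed by the shift (\ref{Nshift}) that enforces the target fusion rule. This shift pins $M$ to an affine function of $(r_1,\dots,r_4,i,j,N)$ and turns the $\widetilde r_n,\widetilde i,\widetilde j$ into explicit affine combinations whose constant terms involve $N-2$ and $M-2$. Non-negativity of the image thus reduces to a linear-programming feasibility question on $\mathcal{P}$, one for each of the at most $|E|\le|S|=144$ external symmetries.

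I would then carry out the case check via rollback. By (\ref{rollback}) both the input and the output 6-j symbols rewrite as ordinary $U_q(sl_2)$ 6-j symbols with arguments shifted by $N-2$ and $M-2$ respectively, and the external symmetry becomes, after rollback, the corresponding element of $H=S\big|_{N=M=2}$, i.e.\ a permutation, tetrahedral move, or Regge transformation. The subgroup $H$ is classically known to send the triangle polytope $\mathcal{P}$ into itself (cf.\ \cite{Regge,klimyk}), so the image on the $U_q(sl_2)$ side has non-negative arguments. Since by definition $E$ enforces $M\ge 2$, both shifts $M-2$ and $N-2$ are non-negative, and reintroducing them preserves non-negativity; hence $\widetilde r_n,\widetilde i,\widetilde j\ge 0$.

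The main obstacle is making precise the claim that rollback intertwines an external symmetry with its $H$-counterpart — that acting by $g\in E$ at level $(N,M)$ and then rolling back to $(2,2)$ coincides with first rolling back and then acting by the corresponding element of $H$. This is a compatibility check between the hypergeometric moves defining $E$ and the shift (\ref{Nshift}); once established, the remainder reduces to classical non-negativity for $H$ plus the trivial inequalities $N-2,M-2\ge 0$. In practice the verification can be delegated to computer algebra (or carried out on a small set of generators of $E$ and propagated by composition), which matches the authors' remark that the tabulation was done manually.
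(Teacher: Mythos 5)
Your reduction of non-triviality to the triangle polytope $\mathcal{P}$ is fine and is essentially what the paper uses (it quotes the equivalent bounds $\max(|r_1-r_2|,|r_3-r_4|)\le i\le\min(r_1+r_2,r_3+r_4)$ and similarly for $j$). The genuine gap is in the rollback step. The relation (\ref{rollback}) passes from the $U_q(sl_M)$ symbol to a $U_q(sl_2)$ symbol by \emph{adding} $M-2$ to several of its arguments; so even granting the intertwining you flag as the main obstacle, classical non-negativity on the $U_q(sl_2)$ side only yields, for every shifted entry, something like $\widetilde r_2+(M-2)\ge 0$, i.e.\ $\widetilde r_2\ge -(M-2)$. ``Reintroducing'' the shifts means \emph{subtracting} $M-2\ge 0$, which does not preserve non-negativity, so the conclusion $\widetilde r_n,\widetilde i,\widetilde j\ge 0$ does not follow from your chain. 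Nor can one shortcut by saying the target is a bona fide 6-j symbol and hence automatically has admissible labels: the external symmetries are derived as identities of hypergeometric series, which make sense for arbitrary affine arguments, and the proposition is precisely what is needed to reinterpret the right-hand side as a genuine 6-j symbol.

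The paper's own proof is elementary and direct: each entry of each transformed symbol is an explicit affine combination of $r_1,\dots,r_4,i,j,N$ (tabulated in the Appendix), and its non-negativity is checked expression by expression from the triangle inequalities satisfied by a non-trivial input, e.g.\ $r_1+r_3+i-j\ge r_1+r_3+(r_4-r_3)-(r_1+r_4)=0$ and $3r_2+r_4-i-j\ge r_2+r_4-r_1-r_3=0$ (the last step using the fusion rule). If you want to keep your linear-programming framing, that is the correct fix: verify each affine form directly over $\mathcal{P}$, one external symmetry at a time, rather than routing the argument through $N=M=2$.
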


The proof of this statement uses explicit relations for 6-j symbol symmetries and it is proven in Appendix.

\begin{st}
	The internal symmetries of 6-j symbols form group $I$ with the following structure. It is isomorphic to either $\mathbb{S}_4$ for type {\rm I} or $\mathbb{S}_3 \times \mathbb{Z}_2$ for type {\rm II}. 
\end{st}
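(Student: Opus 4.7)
The plan is to enumerate the $144$ elements of $S\subseteq G$ and isolate the ones whose induced action on the 6-j parameters preserves the rank $N$. Each $g\in S$ acts on the hypergeometric arguments $(a_1,\ldots,a_4;b_1,b_2,b_3)$ as a composition of permutations and Sears' transformations; inverting the linear system \eqref{final_expr} yields an induced transformation $(r_1,r_2,r_3,r_4,i,j,N)\mapsto(\widetilde r_1,\widetilde r_2,\widetilde r_3,\widetilde r_4,\widetilde i,\widetilde j,\widetilde N)$. By definition, the element $g$ belongs to $I$ precisely when $\widetilde N=N$ identically in the remaining parameters.

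The first step is to read off candidate internal subgroups from the structure of the $N$-dependence in \eqref{final_expr}. For type I, $N$ appears only in $b_1$ and $b_3$, while every $a_i$ is $N$-independent. Consequently, any pure permutation of $\{a_1,a_2,a_3,a_4\}$ leaves $(b_1,b_2,b_3)$ pointwise fixed, so the system that determines $\widetilde N$ coincides with the one determining $N$ and forces $\widetilde N = N$. A direct check (using in particular the fusion rules \eqref{fusion_rules}) shows that this action on the 6-j parameters is faithful, yielding a copy of $\mathbb{S}_4$ and hence $24$ internal symmetries. For type II the $N$-dependence is concentrated in $a_1$ and $b_1$, so the analogous construction singles out permutations of $\{a_2,a_3,a_4\}$ (giving $\mathbb{S}_3$) together with the transposition $b_2\leftrightarrow b_3$ (giving $\mathbb{Z}_2$); these act on disjoint index sets and therefore commute, producing $\mathbb{S}_3\times\mathbb{Z}_2$ of order $12$.

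The second step is to verify that no further elements of $S$ preserve $N$. For each element outside the candidate subgroup, one writes $\widetilde N$ as an affine linear function of the six other parameters and checks that at least one coefficient is non-zero, so that $\widetilde N \neq N$ generically. Any transposition moving $b_1$ (for either type), any transposition involving $b_3$ (for type I) or $a_1$ (for type II), and Sears' transformation itself redistribute the $N$-linear contributions among the hypergeometric arguments and thus fall into this class. Because the orders $24$ and $12$ obtained in the first step agree with the totals produced by the computer-algebra enumeration already used in this subsection, the identification with $\mathbb{S}_4$ and $\mathbb{S}_3\times\mathbb{Z}_2$ is then complete.

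The main obstacle is precisely this exhaustive check: Sears' transformation combines multiplicatively with permutations in ways that are hard to control by hand, and one must rule out the possibility that a composition of several such moves accidentally restores $\widetilde N=N$ even though each constituent does not. A fully analytic treatment would require either a reduction of $S$ to a manageable set of generators whose action on the $N$-linear part is transparent, or a direct case analysis of all $144$ elements. The latter is the route actually taken, in line with the convention of this subsection: the enumeration tabulates $\widetilde N$ as a linear form in the other six parameters for every $g\in S$ and identifies exactly those elements on which this form collapses to $N$, which turn out to be precisely the $\mathbb{S}_4$ (type I) and $\mathbb{S}_3\times\mathbb{Z}_2$ (type II) identified above.
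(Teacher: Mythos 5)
Your fallback position --- tabulate $\widetilde N$ as a linear form in the remaining six parameters for each of the $144$ elements of $S$ and keep those on which it collapses to $N$ --- is in substance what the paper does: the statements of this subsection are declared to be checked by computer algebra rather than proved analytically, and the $24$ resp.\ $12$ internal symmetries are then listed explicitly with their isomorphism tables in subsections \ref{SS2} and \ref{SS3}. The structural shortcut you put in front of that enumeration, however, has two genuine problems. First, fixing the $N$-carrying hypergeometric arguments does not by itself force $\widetilde N=N$: the map $(r_1,\dots,r_4,i,j,N)\mapsto(a_1,\dots,a_4;b_1,b_2,b_3)$ has a one-dimensional kernel spanned by the vector $\vec v$ of \eqref{zerotrans}, whose $N$-component is $-1$, so even the identity permutation of the arguments is compatible with a one-parameter family of preimages with varying $\widetilde N$. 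It is the fusion rule \eqref{fusion_rules}, not the equality of argument values, that pins down $\widetilde N$, and you invoke the fusion rules only for faithfulness, not for this step.

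Second, and more seriously, identifying $I$ for type II with $\mathbb{S}_{\{a_2,a_3,a_4\}}\times\langle(b_2\,b_3)\rangle$ acting by argument permutations is not correct. The group $S$ is generated by permutations \emph{together with Sears' transformation}, so its elements need not permute the $(a_i;b_j)$ at all. Concretely, the type II Regge-type internal symmetry \eqref{regge_II} sends $2a_1=-r_1-r_2+i-2(N-2)$ to $-r_1+r_4-j-2(N-2)$, which even after imposing $r_1+r_2=r_3+r_4$ equals $r_2-r_3-j-2(N-2)$ and is not one of the original seven arguments; this element of $I$ therefore involves Sears' transformation and is invisible to a classification by pure permutations. (Type I is more forgiving --- the Regge symmetry \eqref{regge_I} really does act as the transposition $a_1\leftrightarrow a_3$ with the $b$'s fixed, which is why your first-step picture looks plausible there.) Since the candidate subgroup of your first step does not coincide with $I$ for type II, the entire burden falls on the exhaustive check, i.e.\ on exactly the case-by-case computation the paper performs; the structural argument does not reduce that work, and as stated it would misidentify which $12$ elements of $S$ are internal.
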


If we consider only internal symmetries, we obtain subgroup $I\subset S$. One can check in a straightforward way that $|I|=24$ for type I, $|I|=12$ for type II and the symmetries are isomorphic to mentioned groups. 
\begin{align}
&G \quad \supset \quad S \stackrel{N=M}{\supset} I, \qquad E:=S/I,\nonumber\\\nonumber
\text{Type I:}\qquad&S \cong \mathbb{S}_4 \times \mathbb{S}_3, \qquad I \cong \mathbb{S}_4, \\\nonumber
\text{Type II:}\qquad&S \cong \mathbb{S}_4 \times \mathbb{S}_3, \qquad I \cong \mathbb{S}_3 \times \mathbb{Z}_2,\\\nonumber
&|G|=23040,\ |S|=144
\end{align}

\bigskip

The explicit relations are written in the next subsections. The internal symmetries from $I$ may be applied to any 6-j symbol of the corresponding type. In other words, for every $r_n,i,j$ with the satisfied fusion rules it is possible to write down all symmetries from $I$.

External symmetries relate 6-j symbols for different algebras. There are two important things to note here. Firstly, 6-j symbols and $_4\Phi_3$ differs by a factor that is not always invariant under external symmetries, so we need to add a normalizing factor to this symmetry. Secondly, since there are two group ranks $N$ and $M$, both of them should be greater than or equal to 2 for the symmetry to be valid. As a result, it may be applied only to the part of all type I and type II 6-j symbols.


Let us note that for $U_q(sl_2)$ there are no restrictions from fusion rules, therefore $S$ coincides with $I$ and we have all $144$ symmetries \cite{klimyk}. 


\begin{rem}
		Both internal and external symmetries can be derived using the relation (\ref{rollback}) between $U_q(sl_2)$ 6-j symbols and MFS. 
	\end{rem}
	This method may also be used to check the obtained equalities. If one expresses the list of MFS symmetries as $U_q(sl_2)$ 6-j symbols equalities, factors can be reduced and the equalities form the list of $U_q(sl_2)$ symmetries.

\subsection{Type I internal symmetries}\label{SS2}

In this subsection we write down the internal symmetries of type I.
These symmetries are very similar to the known ones and can be seen as a natural generalization of the symmetries from $U_q(sl_2)$, although in terms of Young diagrams it's not obvious. In the shortened notation it is easy to see the correspondence between $U_q(sl_2)$ and $U_q(sl_N)$ symmetries. Although the internal symmetries of type I by definition need $r_1+r_3 = r_2 + r_4$ to be satisfied, we do not write it explicitly because in every equality either both 6-j symbols exist or both of them do not. The same idea is used for type II internal symmetries. To write down the symmetries in a more compact way, we use the following variables:
\begin{equation}\label{rho_def}
\rho = \dfrac{r_1+r_2+r_3 +r_4}{2} \hspace{5mm} \rho' = \dfrac{r_2+i +r_4+j}{2} = \dfrac{r_1+i+r_3 +j}{2} = \rho''.
\end{equation}

All 6-j symbols below are equal and form group $I$. Columns of the equality list correspond to row permutations, rows correspond to Regge symmetries analogue:
{\small
\begin{align}
\left[ \begin{matrix}
r_1 & r_2 & i\\
r_3 & r_4 & j
\end{matrix} \right]_1^N \hspace{22pt} = \hspace{22pt} \left[ \begin{matrix}
r_3 & r_4 & i\\
r_1 & r_2 & j
\end{matrix} \right]_1^N \hspace{22pt} &= \hspace{22pt} \left[ \begin{matrix}
r_1 & r_4 & j\\
r_3 & r_2 & i
\end{matrix} \right]_1^N \hspace{22pt} = \hspace{22pt}\left[ \begin{matrix}
r_3 & r_2 & j\\
r_1 & r_4 & i
\end{matrix} \right]_1^N \\=\nonumber
\left[ \begin{matrix}
\rho-r_4 & \rho-r_3 & i\\
\rho-r_2 & \rho-r_1 & j
\end{matrix} \right]_1^N \hspace{6pt} = \hspace{6pt} \left[ \begin{matrix}
\rho-r_2 & \rho-r_1 & i\\
\rho-r_4 & \rho-r_3 & j
\end{matrix} \right]_1^N \hspace{6pt} &= \hspace{6pt} \left[ \begin{matrix}
\rho-r_4 & \rho-r_1 & j\\
\rho-r_2 & \rho-r_3 & i
\end{matrix} \right]_1^N \hspace{6pt} = \hspace{6pt} \left[ \begin{matrix}
\rho-r_2 & \rho-r_3 & j\\
\rho-r_4 & \rho-r_1 & i
\end{matrix} \right]_1^N \\=\nonumber
\left[ \begin{matrix}
r_1 & \rho'-j & \rho'-r_4\\
r_3 & \rho'-i & \rho'-r_2
\end{matrix} \right]_1^N \hspace{3pt} = \hspace{3pt} \left[ \begin{matrix}
r_3 & \rho'-i & \rho'-r_4\\
r_1 & \rho'-j & \rho'-r_2
\end{matrix} \right]_1^N \hspace{3pt} &= \hspace{3pt} \left[ \begin{matrix}
r_1 & \rho'-i & \rho'-r_2\\
r_3 & \rho'-j & \rho'-r_4
\end{matrix} \right]_1^N \hspace{3pt} = \hspace{3pt} \left[ \begin{matrix}
r_3 & \rho'-j & \rho'-r_2\\
r_1 & \rho'-i & \rho'-r_4
\end{matrix} \right]_1^N \\= \nonumber
\left[ \begin{matrix}
\rho''-j & r_2 & \rho''-r_3\\
\rho''-i & r_4 & \rho''-r_1
\end{matrix} \right]_1^N \hspace{1pt} = \hspace{1pt} \left[ \begin{matrix}
\rho''-i & r_4 & \rho''-r_3\\
\rho''-j & r_2 & \rho''-r_1
\end{matrix} \right]_1^N \hspace{1pt} &= \hspace{1pt} \left[ \begin{matrix}
\rho''-j & r_4 & \rho''-r_1\\
\rho''-i & r_2 & \rho''-r_3
\end{matrix} \right]_1^N \hspace{1pt} = \hspace{1pt} \left[ \begin{matrix}
\rho''-i & r_2 & \rho''-r_1\\
\rho''-j & r_4 & \rho''-r_3
\end{matrix} \right]_1^N \\= \nonumber
\left[ \begin{matrix}
\rho''{-}j & \rho{-}r_3 & \rho'{-}r_4\\
\rho''{-}i & \rho{-}r_1 & \rho'{-}r_2
\end{matrix} \right]_1^N = \left[ \begin{matrix}
\rho''{-}i & \rho{-}r_1 & \rho'{-}r_4\\
\rho''{-}j & \rho{-}r_3 & \rho'{-}r_2
\end{matrix} \right]_1^N &= \left[ \begin{matrix}
\rho''{-}j & \rho{-}r_1 & \rho'{-}r_2\\
\rho''{-}i & \rho{-}r_3 & \rho'{-}r_4
\end{matrix} \right]_1^N = \left[ \begin{matrix}
\rho''{-}i & \rho{-}r_3 & \rho'{-}r_2\\
\rho''{-}j & \rho{-}r_1 & \rho'{-}r_4
\end{matrix} \right]_1^N \\= \nonumber
\left[ \begin{matrix}
\rho{-}r_4 & \rho'{-}j & \rho''{-}r_3\\
\rho{-}r_2 & \rho'{-}i & \rho''{-}r_1
\end{matrix} \right]_1^N = \left[ \begin{matrix}
\rho{-}r_2 & \rho'{-}i & \rho''{-}r_3\\
\rho{-}r_4 & \rho'{-}j & \rho''{-}r_1
\end{matrix} \right]_1^N &= \left[ \begin{matrix}
\rho{-}r_4 & \rho'{-}i & \rho''{-}r_1\\
\rho{-}r_2 & \rho'{-}j & \rho''{-}r_3
\end{matrix} \right]_1^N = \left[ \begin{matrix}
\rho{-}r_2 & \rho'{-}j & \rho''{-}r_1\\
\rho{-}r_4 & \rho'{-}i & \rho''{-}r_3
\end{matrix} \right]_1^N.
\end{align}
}

These 24 symmetries form a representation of group $I$ mentioned above. It has two notable subgroups: row permutations and Regge transformations analogue. The isomorphism $I \cong \mathbb{S}_4$ is as follows. Permutations from the first row correspond to $\{(),(12)(34), (14)(23), (13)(24)\}$. The first column symmetries correspond to $\{(),(12), (23), (13), (123), (132)\}$. All others can be read from the table:
$$\begin{tabular}{|c|c|c|c|}
\hline 
() & (12)(34) & (14)(23) & (13)(24) \\ 
\hline 
(12) & (34) & (1324) & (1423) \\ 
\hline 
(23) & (1243) & (14) & (1342) \\ 
\hline 
(13) & (1432) & (1234) & (24) \\ 
\hline 
(123) & (243) & (134) & (142) \\ 
\hline 
(132) & (143) & (124) & (234) \\ 
\hline 
\end{tabular} $$

We can write down the generalization of Regge transformations (\ref{Regge}):
\begin{empheq}[box=\fbox]{align}
\left[ \begin{matrix}
r_1 & r_2 & i\\
r_3 & r_4 & j
\end{matrix} \right]_1^N =& \left[ \begin {matrix} \label{regge_I}
{ r_1}&\rho' - j&\rho' - r_4\\ { r_3}&\rho' - i&\rho' - r_2\end {matrix} \right]_1^N=
\left[ \begin {matrix}
 \rho' - j &r_2&\rho' - r_3\\ \rho' - i&r_4&\rho' - r_1\end {matrix} \right]_1^N.
\end{empheq}

Let us give a couple of examples of these symmetries:
\begin{itemize}
	\item Regge symmetry analogue, type I (1\textsuperscript{st} column is invariant, $N\ge2$):
	{\small \begin{align*}
	\hspace{-5mm}\left\lbrace \begin{matrix}
	[8] & \overline{[4]} & [12,4^{N-2}]\\
	[10] & [14] & [6]
	\end{matrix} \right\rbrace&=\left\lbrace \begin{matrix}
	[8] & \overline{[6]} & [14,6^{N-2}]\\
	[10] & [12] & [4]
	\end{matrix} \right\rbrace,\
	\left\lbrace \begin{matrix}
	[10] & \overline{[8]} & [18,8^{N-2}]\\
	[12] & [14] & [6,5^{N-2}]
	\end{matrix} \right\rbrace=\left\lbrace \begin{matrix}
	[10] & \overline{[5]} & [15,5^{N-2}]\\
	[12] & [17] & [9,8^{N-2}]
	\end{matrix} \right\rbrace,\\
	\hspace{-5mm}\left\lbrace \begin{matrix}
	[12] & \overline{[6]} & [16,5^{N-2}]\\
	[14] & [20] & [8]
	\end{matrix} \right\rbrace&=\left\lbrace \begin{matrix}
	[12] & \overline{[9]} & [19,8^{N-2}]\\
	[14] & [17] & [5,5^{N-2}]
	\end{matrix} \right\rbrace,\
	\left\lbrace \begin{matrix}
	[12] & \overline{[8]} & [10,3^{N-2}]\\
	[14] & [18] & [6]
	\end{matrix} \right\rbrace=\left\lbrace \begin{matrix}
	[12] & \overline{[11]} & [13,6^{N-2}]\\
	[14] & [15] & [3]
	\end{matrix} \right\rbrace.
	\end{align*}}
	\item Regge symmetry analogue, type I (2\textsuperscript{nd} column is invariant, $N\ge2$):
	{\small\begin{align*}
	\left\lbrace \begin{matrix}
	[4] & \overline{[6]} & [2,2^{N-2}]\\
	[3] & [1] & [5,4^{N-2}]
	\end{matrix} \right\rbrace&=\left\lbrace \begin{matrix}
	[2] & \overline{[6]} & [4,4^{N-2}]\\
	[5] & [1] & [3,2^{N-2}]
	\end{matrix} \right\rbrace,\
	\left\lbrace \begin{matrix}
	[6] & \overline{[5]} & [7,3^{N-2}]\\
	[3] & [4] & [2,2^{N-2}]
	\end{matrix} \right\rbrace=\left\lbrace \begin{matrix}
	[7] & \overline{[5]} & [6,2^{N-2}]\\
	[2] & [4] & [3,3^{N-2}]
	\end{matrix} \right\rbrace,\\
	\left\lbrace \begin{matrix}
	[5] & \overline{[6]} & [7,4^{N-2}]\\
	[4] & [3] & [8,5^{N-2}]
	\end{matrix} \right\rbrace&=\left\lbrace \begin{matrix}
	[4] & \overline{[6]} & [8,5^{N-2}]\\
	[5] & [3] & [7,4^{N-2}]
	\end{matrix} \right\rbrace,\
	\left\lbrace \begin{matrix}
	[4] & \overline{[6]} & [2,2^{N-2}]\\
	[5] & [3] & [7,4^{N-2}]
	\end{matrix} \right\rbrace=\left\lbrace \begin{matrix}
	[2] & \overline{[6]} & [4,4^{N-2}]\\
	[7] & [3] & [5,2^{N-2}]
	\end{matrix} \right\rbrace.\\
	\end{align*}}
\end{itemize}

\subsection{Type II internal symmetries}\label{SS3}

One can similarly consider type II, there are only 12 symmetries. For brevity we use the following variables:
\begin{equation}
\rho = \dfrac{r_1+r_2+r_3 +r_4}{2} \hspace{5mm} \rho' = \dfrac{r_2+i +r_4+j}{2} \hspace{5mm} \rho'' = \dfrac{r_1+i+r_3 +j}{2}.
\end{equation}
All 6-j symbols below are equal. Columns of the table correspond to a column permutation, rows correspond to Regge symmetries.

\begin{align}
	\left[ \begin{matrix}
	r_1 & r_2 & i\\
	r_3 & r_4 & j
	\end{matrix} \right]_2^N &= \left[ \begin{matrix}
	r_2 & r_1 & i\\
	r_4 & r_3 & j
	\end{matrix} \right]_2^N \\=  \left[ \begin{matrix}\nonumber
	\rho-r_3 & \rho-r_4 & i\\
	\rho-r_1 & \rho-r_2 & j
	\end{matrix} \right]_2^N &= \left[ \begin{matrix}
	\rho-r_4 & \rho-r_3 & i\\
	\rho-r_2 & \rho-r_1 & j
	\end{matrix} \right]_2^N \\= \left[ \begin{matrix}\nonumber
	r_1 & \rho' - r_4 & \rho' - j\\
	r_3 & \rho' - r_2 & \rho' - i
	\end{matrix} \right]_2^N &= \left[ \begin{matrix}
	\rho' - r_4 & r_1 & \rho' - j\\
	\rho' - r_2 & r_3 & \rho' - i
	\end{matrix} \right]_2^N \\= \left[ \begin{matrix}\nonumber
	\rho'' - r_3 & r_2 & \rho'' - j\\
	\rho'' - r_1 & r_4 & \rho'' - i
	\end{matrix} \right]_2^N &= \left[ \begin{matrix}
	r_2 & \rho'' - r_3 & \rho'' - j\\
	r_4 & \rho'' - r_1 & \rho'' - i
	\end{matrix} \right]_2^N \\= \left[ \begin{matrix}\nonumber
	\rho-r_3 & \rho' - r_4 & \rho'' - j\\
	\rho-r_1 & \rho' - r_2 & \rho'' - i
	\end{matrix} \right]_2^N &= \left[ \begin{matrix}
	\rho' - r_4 & \rho-r_3 & \rho'' - j\\
	\rho' - r_2 & \rho-r_1 & \rho'' - i
	\end{matrix} \right]_2^N \\= \left[ \begin{matrix}\nonumber
	\rho''-r_3 & \rho-r_4 & \rho' - j\\
	\rho''-r_1 & \rho-r_2 & \rho' - i
	\end{matrix} \right]_2^N &= \left[ \begin{matrix}
	\rho-r_4 & \rho' - r_3 & \rho' - j\\
	\rho-r_2 & \rho' - r_1 & \rho' - i
	\end{matrix} \right]_2^N .
\end{align}
The structure of isomorphism $I\cong \mathbb{S}_3 \times \mathbb{Z}_2$ is as follows:
$$\begin{tabular}{|c|c|}
\hline 
()() & (12)() \\ 
\hline 
(12)(ab)& ()(ab) \\ 
\hline 
(13)(ab)& (132)(ab) \\ 
\hline 
(23)(ab)& (123)(ab) \\ 
\hline 
(123)() & (23)() \\ 
\hline 
(132)()& (13)() \\ 
\hline 
\end{tabular}  $$

The Regge transformation is the only new relation here:
\begin{empheq}[box=\fbox]{align}
\left[ \begin{matrix}
r_1 & r_2 & i\\
r_3 & r_4 & j
\end{matrix} \right]_2^N  =  \left[ \begin {matrix}\label{regge_II}
{ r_1}&\rho' - r_4&\rho' - j\\
{ r_3}&\rho' - r_2&\rho' - i
\end {matrix} \right]_2^N = \left[ \begin {matrix}
\rho'' - r_3 &r_2&\rho'' - j\\
\rho'' - r_1&r_4&\rho'' - i
\end {matrix} \right]_2^N 
\end{empheq}

Let us give a couple of examples of these symmetries.
\begin{itemize}
	\item Regge symmetry analogue, type II (1\textsuperscript{st} column is invariant, $N\ge2$):
	\begin{align*}
	\left\lbrace \begin{matrix}
	[5] & [6] & [10,1]\\
	\overline{[3]} & [8] & [7,5^{N-2}]
	\end{matrix} \right\rbrace=\left\lbrace \begin{matrix}
	[5] & {[7]} & [10,2]\\
	\overline{[3]} & [9] & [6,5^{N-2}]
	\end{matrix} \right\rbrace,\
	\left\lbrace \begin{matrix}
	[5] & {[6]} & [11]\\
	\overline{[1]} & [10] & [7,6^{N-2}]
	\end{matrix} \right\rbrace&=\left\lbrace \begin{matrix}
	[5] & {[7]} & [11,1]\\
	\overline{[1]} & [11] & [6,6^{N-2}]
	\end{matrix} \right\rbrace,\\
	\left\lbrace \begin{matrix}
	[4] & {[6]} & [10]\\
	\overline{[1]} & [9] & [7,6^{N-2}]
	\end{matrix} \right\rbrace=\left\lbrace \begin{matrix}
	[4] & {[7]} & [10,1]\\
	\overline{[1]} & [10] & [6,6^{N-2}]
	\end{matrix} \right\rbrace,\
	\left\lbrace \begin{matrix}
	[3] & {[6]} & [8,1]\\
	\overline{[4]} & [5] & [8,5^{N-2}]
	\end{matrix} \right\rbrace&=\left\lbrace \begin{matrix}
	[3] & {[8]} & [8,3]\\
	\overline{[4]} & [7] & [6,5^{N-2}]
	\end{matrix} \right\rbrace.
	\end{align*}
	\item Regge symmetry analogue, type II (2\textsuperscript{nd} column is invariant, $N\ge2$):
	\begin{align*}
	\left\lbrace \begin{matrix}
	[4] & [2] & [6]\\
	\overline{[1]} & [5] & [3,2^{N-2}]
	\end{matrix} \right\rbrace=\left\lbrace \begin{matrix}
	[6] & {[2]} & [6,2]\\
	\overline{[3]} & [5] & [1]
	\end{matrix} \right\rbrace,&\
	\left\lbrace \begin{matrix}
	[4] & {[3]} & [6,1]\\
	\overline{[1]} & [6] & [2,2^{N-2}]
	\end{matrix} \right\rbrace=\left\lbrace \begin{matrix}
	[5] & {[3]} & [6,2]\\
	\overline{[2]} & [6] & [1,1^{N-2}]
	\end{matrix} \right\rbrace,\\
	\left\lbrace \begin{matrix}
	[5] & {[6]} & [10,1]\\
	\overline{[4]} & [7] & [10,6^{N-2}]
	\end{matrix} \right\rbrace=\left\lbrace \begin{matrix}
	[10] & {[6]} & [10,6]\\
	\overline{[9]} & [7] & [5,1^{N-2}]
	\end{matrix} \right\rbrace,&\
	\left\lbrace \begin{matrix}
	[5] & {[6]} & [9,2]\\
	\overline{[2]} & [9] & [4,4^{N-2}]
	\end{matrix} \right\rbrace=\left\lbrace \begin{matrix}
	[7] & {[6]} & [9,4]\\
	\overline{[4]} & [9] & [2,2^{N-2}]
	\end{matrix} \right\rbrace.
	\end{align*}

\end{itemize}

\subsection{Type I external symmetries}\label{SS4}
In this subsection we consider external symmetries from the group $S$. 

\begin{nota}
	Let us denote by $\cong$ a external symmetry between two 6-j symbols with additional inequality restriction $M \ge 2$. For brevity we also drop out factors that occur in equalities and can be written as $C=(-1)^{N-M}\frac{K_T(r_1,r_2,r_3,r_4,i,j,N)}{K_T(\widetilde{r}_1,\widetilde{r}_2,\widetilde{r}_3,\widetilde{r}_4,\widetilde{i},\widetilde{j},M)}$.
\end{nota}

Let us consider external symmetries of type I. It is convenient to write down not the whole set $S\setminus I$, but the factor $E=S/I$. In $U_q(sl_2)$ we have the subgroups of Regge transformations, row and column permutations, one can notice that here we also have similar subgroups. The external symmetries for type I are analogous to column permutations and may be easily written with notations $\Delta_i=N-M_i$, $n_i=M_i-2$, $n_0=N-2$.

\begin{align}
\left[ \begin{matrix}
r_1 & r_2 & i\\
r_3 & r_4 & j
\end{matrix} \right]_1^N \cong &\left[ \begin{matrix}\label{w1s1}
r_1 & i+\Delta_1 & r_2+\Delta_1\\
r_3 & j+\Delta_1 & r_4+\Delta_1
\end{matrix} \right]_1^{M_1}  \cong \left[ \begin{matrix}
i+n_0 & r_2+\Delta_2 & r_1-n_2\\
j+n_0 & r_4+\Delta_2 & r_3-n_2
\end{matrix} \right]_1 ^{M_2}\\ \cong &\left[ \begin{matrix}\label{w1s2}
i+n_0 & r_1-n_3 & r_2+\Delta_3\\
j+n_0 & r_3-n_3 & r_4+\Delta_3
\end{matrix} \right]_1 ^{M_3}  \cong \left[ \begin{matrix}
r_2+n_0 & i+\Delta_4 & r_1-n_4 \\
r_4+n_0 & j+\Delta_4 & r_3-n_4
\end{matrix} \right]_1 ^{M_4}\\ 
\stackrel{N=M_5=2}{\cong} &\left[ \begin{matrix}\label{w1s3}
r_2+n_0 & r_1-n_5 & i+\Delta_5\\
r_4+n_0 & r_3-n_5 & j+\Delta_5
\end{matrix} \right]_1^{M_5},
\end{align}
where $n_i$, $\Delta_i$ and $M_i$ are fixed by fusion rules. 

We emphasize that $E$ is isomorphic to $\mathbb{S}_3$ only for $N=M_5=2$. In this case 6 elements from above are represented by $\{(),(23),(13),(132),(123),\textbf{(12)}\}$ correspondingly. In general, it is impossible to satisfy the fusion rules, so $E$ have only 4 transformations which are not closed under composition and $E \cong \mathbb{S}_3\setminus \{(12)\}$.

These symmetries are interesting because they cannot be expressed as a combination of any known symmetries. From hypergeometric point of view these symbols have the same value of ${}_4\Phi_3$ but it's still possible that $K_T$ is changed by this transformation.

Let us write down a few examples of these symmetries:
\begin{itemize}
	\item The first symmetry, $N=M_1=4$:
	\begin{align*}
	\left\lbrace \begin{matrix}
	[3] & \overline{[1]} & [4,1^{2}]\\
	[6] & [8] & [5,1^{2}]
	\end{matrix} \right\rbrace& = \left\lbrace \begin{matrix}
	[3] & \overline{[4]} & [1,1^{2}]\\
	[6] & [5] & [8,3^{2}]
	\end{matrix} \right\rbrace.
	\end{align*}
	\item The second symmetry, $N=4, M_2=3$:
	\begin{align*}
	\left\lbrace \begin{matrix}
	[5] & \overline{[4]} & [1]\\
	[7] & [8] & [9,3^{2}]
	\end{matrix} \right\rbrace& = -\sqrt{\dfrac{[2]_q[3]_q}{[5]_q[8]_q}}\left\lbrace \begin{matrix}
	[3] & \overline{[5]} & [4,3]\\
	[11] & [9] & [6]
	\end{matrix} \right\rbrace.
	\end{align*}
	\item The third symmetry, $N=4,M_3=2$:
	\begin{align*}
	\left\lbrace \begin{matrix}
	[7] & \overline{[3]} & [4]\\
	[2] & [6] & [1,1^{2}]
	\end{matrix} \right\rbrace& = \sqrt{\dfrac{[2]_q[3]_q}{[5]_q[6]_q}}\left\lbrace \begin{matrix}
	[6] & \overline{[7]} & [5]\\
	[3] & [2] & [8]
	\end{matrix} \right\rbrace.
	\end{align*}\item The fourth symmetry, $N=4,M_4=5$:
	\begin{align*}
	\left\lbrace \begin{matrix}
	[6] & \overline{[4]} & [8,3^{2}]\\
	[5] & [7] & [9,4^{2}]
	\end{matrix} \right\rbrace& = -\dfrac{[2]_q[3]_q}{[7]_q}\sqrt{\dfrac{1}{[6]_q^3}}\left\lbrace \begin{matrix}
	[6] & \overline{[7]} & [3,2^{3}]\\
	[9] & [8] & [2]
	\end{matrix} \right\rbrace.
	\end{align*}
\end{itemize}

\subsection{Type II external symmetries}\label{SS5}

In a similar way we can consider type II symmetries $E=S/I$ and fix $M$ by transformation (\ref{Nshift}). These symmetries are analogous to a column permutation and row permutations:
\begin{align}
\hspace{-5mm}&\left[ \begin{matrix}\label{w2s1}
r_1 & r_2 & i\\
r_3 & r_4 & j
\end{matrix} \right]_2^N \cong  \left[ \begin{matrix} 
 j + \Delta_1 & r_1 + \Delta_1 & r_4\\
 i & r_3 & r_2 + \Delta_1
\end{matrix} \right]_2^{M_1}  \cong  \left[ \begin{matrix}
r_2 + \Delta_2 & j + \Delta_2 & r_3\\
r_4 & i & r_1 + \Delta_2
\end{matrix} \right]_2^{M_2} \cong\\\hspace{-5mm}\cong &\left[ \begin{matrix}\nonumber
i-n_3 & r_1+\Delta_3 & r_2+n_0\\
j+n_0 & r_3 & r_4-n_3
\end{matrix} \right]_2^{M_3}  \cong  \left[ \begin{matrix}\nonumber
r_2+\Delta_4 & i-n_4 & r_1+n_0\\
r_4 & j+n_0 & r_3-n_4
\end{matrix} \right]_2^{M_4} \cong  \left[ \begin{matrix}\nonumber
r_4-n_5 & j+\Delta_5 & r_1+n_0\\
r_2+n_0 & i & r_3-n_5
\end{matrix} \right]_2^{M_5} \cong\\\hspace{-5mm}\cong  &\left[ \begin{matrix}\nonumber
j+\Delta_6 & r_3-n_6 & r_2+n_0\\
i & r_1+n_0 & r_4-n_6
\end{matrix} \right]_2^{M_6} \cong  \left[ \begin{matrix}\nonumber
r_1+\Delta_7 & r_4-n_7 & j+n_0\\
r_3 & r_2+n_0 & i-n_7
\end{matrix} \right]_2^{M_7} \cong  \left[ \begin{matrix}\nonumber
r_3-n_8 & r_2+\Delta_8 & j+n_0\\
r_1+n_0 & r_4 & i-n_8
\end{matrix} \right]_2^{M_8} \cong\\\hspace{-5mm}\cong  &\left[ \begin{matrix}\nonumber
r_4-n_9 & i-n_9 & r_3\\
r_2+n_0 & j+n_0 & r_1+\Delta_9
\end{matrix} \right]_2^{M_9} \cong  \left[ \begin{matrix}\nonumber
i-n_{10} & r_3-n_{10} & r_4\\
j+n_0 & r_1+n_0 & r_2+\Delta_{10}
\end{matrix} \right]_2^{M_{10}} \stackrel{N=M_{11}=2}{\cong}  \left[ \begin{matrix}
r_3-n_{11} & r_4-n_{11} & i\\
r_1+n_0 & r_2+n_0 & j+\Delta_{11}
\end{matrix} \right]_2^{M_{11}} \nonumber.
\end{align}

We emphasize that the last 6-j symbol exists only for $N=M_{11}=2$ as it is impossible to satisfy the inequalities otherwise. The isomorphism $E\big{|}_{N=M_{11}=2}\cong \mathbb{A}_4$ is as follows. The first row correspond to elements $\{(),(143), (134)\}$. The first column is presented by $\{(), (132), (234), (243)\}$. Other elements can be read from the table:
$$\begin{tabular}{|c|c|c|}
\hline 
() & (143) & (134) \\ 
\hline 
(132) & (123) & (142) \\ 
\hline 
(234) & (14)(23) & (13)(24) \\ 
\hline 
(243) & (124) & \textbf{(12)(34)} \\ 
\hline 
\end{tabular} $$

If we consider arbitrary $N$, $E$ is not closed under compositions and $E \cong \mathbb{A}_4 \setminus \{(12)(34)\}$.

Let us write down a few examples of these symmetries:
\begin{itemize}
	\item The first symmetry, $N=M_1=4$:
	\begin{align*}
	\left\lbrace \begin{matrix}
	[5] & [2] & [7]\\
	\overline{[4]} & [3] & [6,2^{2}]
	\end{matrix} \right\rbrace& = \left\lbrace \begin{matrix}
	[6] & [5] & [7,4]\\
	\overline{[7]} & [4] & [2,2^{2}]
	\end{matrix} \right\rbrace.
	\end{align*}
	\item The third symmetry, $N=4, M_3=5$:
	\begin{align*}
	\left\lbrace \begin{matrix}
	[8] & [1] & [9]\\
	\overline{[5]} & [4] & [6,1^{2}]
	\end{matrix} \right\rbrace& = -\sqrt{\dfrac{[10]_q[8]_q}{[4]_q[3]_q}}\left\lbrace \begin{matrix}
	[6] & [7] & [8,5]\\
	\overline{[8]} & [5] & [1,1^{3}]
	\end{matrix} \right\rbrace.
	\end{align*}
	\item The seventh symmetry, $N=4,M_7=5$:
	\begin{align*}
	\left\lbrace \begin{matrix}
	[8] & [1] & [9]\\
	\overline{[5]} & [4] & [6,1^{2}]
	\end{matrix} \right\rbrace& = -\sqrt{\dfrac{[10]_q[8]_q}{[4]_q[3]_q}}\left\lbrace \begin{matrix}
	[7] & [1] & [8]\\
	\overline{[5]} & [3] & [6,1^{3}]
	\end{matrix} \right\rbrace.
	\end{align*}\item The ninth symmetry, $N=4,M_9=3$:
	\begin{align*}
	\left\lbrace \begin{matrix}
	[8] & [4] & [6,3]\\
	\overline{[3]} & [9] & [5,3^{2}]
	\end{matrix} \right\rbrace& = -\sqrt{\dfrac{[2]_q^2[3]_q[6]_q^2[10]_q}{[4]_q^3[5]_q[11]_q[12]_q}}\left\lbrace \begin{matrix}
	[8] & [5] & [8,5]\\
	\overline{[6]} & [7] & [9,5]
	\end{matrix} \right\rbrace.
	\end{align*}
\end{itemize}

\section{Main results}

In this section we collect the most important results obtained in the paper. We are using the special notation (\ref{MFS_nota_t1},\ref{MFS_nota_t2}) for MFS.
\begin{itemize}
 \item Expression (\ref{final_expr_hyp}) for MFS via q-hypergeometric series:
\begin{gather}
\left[ \begin{matrix}
r_1 & r_2 & i\\
r_3 & r_4 & j
\end{matrix} \right]_T^N
=K_T\cdot {}_4\Phi_3 \left( \begin{matrix}
a_1,a_2,a_3,a_4\\ b_1,b_2,b_3
\end{matrix}; q,q \right),\\
2 a_i = \left( \begin{matrix}
-r_1-r_2+i - 2(N-2)\delta_{T,2}\\ -r_3-r_4+i\\ -r_1-r_4+j\\ -r_2-r_3+j
\end{matrix} \right),
\qquad 2 b_i = \left( \begin{matrix}  -r_1-r_2-r_3-r_4-2(N-1)\\ i+j-r_2-r_4+2\\  i+j-r_1-r_3 + 2 + 2(N-2)\delta_{T,1}\\\end{matrix} \right).
\end{gather}
Factor $K_T$ depends on type $T$ and defined as in (\ref{coef_K}):
{\small\begin{equation}\hspace{-10mm}
	K_T = \dfrac{\theta_N\left(r_1,r_2,i\right) \theta_N\left(r_3, r_4, i\right) \theta_N\left(r_1, r_4, j\right) \theta_N\left(r_2, r_3, j\right) [N-1]_q![N-2]_q! [\frac{r_1+r_2+r_3+r_4}{2}+N-1]_q! }{[\frac{r_3+r_4-i}{2}]_q! [\frac{r_1+r_2-i}{2} + (N-2)\delta_{T,2}]_q! [\frac{r_2+r_3-j}{2}]_q! [\frac{r_1+r_4-j}{2}]_q! [\frac{i+j-r_2-r_4}{2}]_q! [\frac{i+j-r_1-r_3}{2}+(N-2)\delta_{T,1}]_q!}.
	\end{equation}}
\item Relation (\ref{rollback}) between MFS and $U_q(sl_2)$ 6-j symbols:
\begin{align}
\left[ \begin{matrix}
r_1 & r_2 & i\\
r_3 & r_4 & j
\end{matrix} \right]_1^N
&=\left\{\begin{matrix}
r_1& r_2+N-2 &i+N-2 \\
r_3 & r_4+N-2  & j+N-2
\end{matrix} \right\}(-1)^N [N-1]_q! [N-2]_q! \cdot \Theta_1(N),\\
\left[ \begin{matrix}
r_1 & r_2 & i\\
r_3 & r_4 & j
\end{matrix} \right]_2^N
&=\left\{\begin{matrix}
r_1+N-2 & r_2+N-2 &i\\
r_3 & r_4 & j+N-2
\end{matrix} \right\}(-1)^N [N-1]_q! [N-2]_q! \cdot \Theta_2(N),
\end{align}
with factors $\Theta_1,\Theta_2$ defined in (\ref{coef_theta}):
{\small\begin{align}
\Theta_1(N)&=
\left(\prod_{m=1}^{N-2} \left[\frac{i {-} r_1 {+} r_2}{2}+m\right]_q \left[\frac{j {+} r_2 {-} r_3}{2}+m\right]_q  \left[\frac{j {-} r_1 {+} r_4}{2}+m\right]_q \left[\frac{i {-} r_3 {+} r_4}{2}+m\right]_q \right)^{-\frac{1}{2}},\\
\Theta_2(N)&=
\left(\prod_{m=1}^{N-2} \left[\frac{r_1 {+} r_2{-}i}{2}+m\right]_q \left[\frac{j {+} r_2 {-} r_3}{2}+m\right]_q  \left[\frac{j {+} r_1 {-} r_4}{2}+m\right]_q \left[\frac{i {+} r_3 {+} r_4}{2}+1+m\right]_q \right)^{-\frac{1}{2}}.
\end{align}}
\item The asymptotics (\ref{asympt}) of MFS for $U(sl_N)$:
\begin{equation}
\dfrac{1}{\Theta_T(N)}\left[ \begin{matrix}
r_1 & r_2 & i\\
r_3 & r_4 & j
\end{matrix} \right]_T^N
\sim \dfrac{(-1)^N\cdot(N-1)!\cdot(N-2)!}{\sqrt{12\pi\cdot |V(\widetilde{J}_k)|}} \cos\left(\sum_{n=1}^{6} \widetilde{J}_k \cdot \Omega(\widetilde{J}_k) + \dfrac{\pi}{4} \right),
\end{equation}
where $\widetilde{J}_k$ are defined in (\ref{J_def}).
\end{itemize}
\subsection{Symmetries of 6-j symbols}
There is a group of MFS symmetries that has 144 elements in total. It is convenient to split them into the internal and external symmetries. The internal ones always act in $U_q(sl_N)$, the external ones connect $U_q(sl_N)$ and $U_q(sl_M)$ 6-j symbols.

\begin{itemize}
	\item Counterpart of the Regge transformations (\ref{regge_I}), type I ($\rho' = \frac{r_1+r_3+i +j}{2} = \frac{r_2+r_4+i +j}{2}$):
	\begin{align}
	\left[ \begin{matrix}
	r_1 & r_2 & i\\
	r_3 & r_4 & j
	\end{matrix} \right]_1^N =& \left[ \begin {matrix} 
	{ r_1}&\rho' - j&\rho' - r_4\\ { r_3}&\rho' - i&\rho' - r_2\end {matrix} \right]_1^N=
	\left[ \begin {matrix}
	\rho' - j &r_2&\rho' - r_3\\ \rho' - i&r_4&\rho' - r_1\end {matrix} \right]_1^N.
	\end{align}
\item Counterpart of the Regge transformations (\ref{regge_II}), type II ($\rho' = \frac{r_1+r_3+i +j}{2}, \rho''= \frac{r_2+r_4+i +j}{2}$):
\begin{align}
\left[ \begin{matrix}
r_1 & r_2 & i\\
r_3 & r_4 & j
\end{matrix} \right]_2^N  =  \left[ \begin {matrix} { r_1}&\rho' - r_4&\rho' - j\\ { r_3}&\rho' - r_2&\rho' - i\end {matrix} \right]_2^N =
\left[ \begin {matrix}  \rho'' - r_3 &r_2&\rho'' - j\\ \rho'' - r_1&r_4&\rho'' - i\end {matrix} \right]_2^N.
\end{align}
\end{itemize}

The next symmetries are between $U_q(sl_N)$ and $U_q(sl_M)$ 6-j symbols. Values of $M_i$ are fixed by fusion rules. For brevity we use the notations $\Delta_i=N-M_i$, $n_i=M_i-2$, $n_0=N-2$.
\begin{itemize}
	\item Type I external symmetries (\ref{w1s1}):

{\small\begin{align}
&\left[ \begin{matrix}
r_1 & r_2 & i\\
r_3 & r_4 & j
\end{matrix} \right]_1^N \cong \left[ \begin{matrix}
r_1 & i+\Delta_1 & r_2+\Delta_1\\
r_3 & j+\Delta_1 & r_4+\Delta_1
\end{matrix} \right]_1^{M_1}  \cong \left[ \begin{matrix}
i+n_0 & r_2+\Delta_2 & r_1-n_2\\
j+n_0 & r_4+\Delta_2 & r_3-n_2
\end{matrix} \right]_1 ^{M_2}\cong\\ \cong &\left[ \begin{matrix}\nonumber
i+n_0 & r_1-n_3 & r_2+\Delta_3\\
j+n_0 & r_3-n_3 & r_4+\Delta_3
\end{matrix} \right]_1 ^{M_3}  \cong \left[ \begin{matrix}
r_2+n_0 & i+\Delta_4 & r_1-n_4 \\
r_4+n_0 & j+\Delta_4 & r_3-n_4
\end{matrix} \right]_1 ^{M_4} \stackrel{N=M_5=2}{\cong}  \left[ \begin{matrix}
r_2+n_0 & r_1-n_5 & i+\Delta_5\\
r_4+n_0 & r_3-n_5 & j+\Delta_5
\end{matrix} \right]_1^{M_5}.
\end{align}}

\item Type II external symmetries (\ref{w2s1}):

{\small\begin{align}
	\hspace{-5mm}&\left[ \begin{matrix}
	r_1 & r_2 & i\\
	r_3 & r_4 & j
	\end{matrix} \right]_2^N \cong  \left[ \begin{matrix} 
	j + \Delta_1 & r_1 + \Delta_1 & r_4\\
	i & r_3 & r_2 + \Delta_1
	\end{matrix} \right]_2^{M_1}  \cong  \left[ \begin{matrix}
	r_2 + \Delta_2 & j + \Delta_2 & r_3\\
	r_4 & i & r_1 + \Delta_2
	\end{matrix} \right]_2^{M_2} \cong\\\hspace{-5mm}\cong &\left[ \begin{matrix}\nonumber
	i-n_3 & r_1+\Delta_3 & r_2+n_0\\
	j+n_0 & r_3 & r_4-n_3
	\end{matrix} \right]_2^{M_3}  \cong  \left[ \begin{matrix}\nonumber
	r_2+\Delta_4 & i-n_4 & r_1+n_0\\
	r_4 & j+n_0 & r_3-n_4
	\end{matrix} \right]_2^{M_4} \cong  \left[ \begin{matrix}\nonumber
	r_4-n_5 & j+\Delta_5 & r_1+n_0\\
	r_2+n_0 & i & r_3-n_5
	\end{matrix} \right]_2^{M_5} \cong\\\hspace{-5mm}\cong  &\left[ \begin{matrix}\nonumber
	j+\Delta_6 & r_3-n_6 & r_2+n_0\\
	i & r_1+n_0 & r_4-n_6
	\end{matrix} \right]_2^{M_6} \cong  \left[ \begin{matrix}\nonumber
	r_1+\Delta_7 & r_4-n_7 & j+n_0\\
	r_3 & r_2+n_0 & i-n_7
	\end{matrix} \right]_2^{M_7} \cong  \left[ \begin{matrix}\nonumber
	r_3-n_8 & r_2+\Delta_8 & j+n_0\\
	r_1+n_0 & r_4 & i-n_8
	\end{matrix} \right]_2^{M_8} \cong\\\hspace{-5mm}\cong  &\left[ \begin{matrix}\nonumber
	r_4-n_9 & i-n_9 & r_3\\
	r_2+n_0 & j+n_0 & r_1+\Delta_9
	\end{matrix} \right]_2^{M_9} \cong  \left[ \begin{matrix}\nonumber
	i-n_{10} & r_3-n_{10} & r_4\\
	j+n_0 & r_1+n_0 & r_2+\Delta_{10}
	\end{matrix} \right]_2^{M_{10}} \stackrel{N=M_{11}=2}{\cong}  \left[ \begin{matrix}
	r_3-n_{11} & r_4-n_{11} & i\\
	r_1+n_0 & r_2+n_0 & j+\Delta_{11}
	\end{matrix} \right]_2^{M_{11}} \nonumber.
	\end{align}}

\end{itemize}

\section{Conclusion}
The 6-j symbols beyond $U_q(sl_2)$ are rapidly becoming very complicated to analyze. Even in the case of symmetric and conjugate to symmetric representations where we know the analytic expression, there are many features that hide from our sight. Firstly, 6-j expression in its original form \cite{MFS} is the q-factorial series that can be written as a function $_5\Phi_4$, but after some manipulations it became clear that the expression is very similar to $U_q(sl_2)$ one and may be written as (\ref{final_expr}) via $_4\Phi_3$.

Secondly, the hypergeometric function has a relation (\ref{hyp_relation}) that is necessary to use the Sears' transformation. This allow us to think that there is an important class of 6-j symbols with 6 free parameters that is connected with $_4\Phi_3$ series. Considered expression (\ref{final_expr}) is already applicable to $N=2$ case and types I, II. It is an interesting question what else may be expressed via $_4\Phi_3$.

The relation (\ref{rollback}) between multiplicity-free $U_q(sl_N)$ and $U_q(sl_2)$ symbols reveals the nature of multiplicity-free case. In fact, multiplicity-free 6-j symbols tends to be very similar to $U_q(sl_2)$ one. As was found in \cite{3SB, Alekseev:2019}, the other class of 6-j symbol with symmetric incoming representations may be expressed via $U_q(sl_2)$ one. The further study of more difficult classes can tell us more about the structure of 6-j symbols, but now we can vividly see that q-hypergeometric series play the main role in this problem.

Obtained symmetries show that there are much more relations for 6-j symbols in $U_q(sl_N)$ than tetrahedral symmetries. As the most bright example of this statement, we show that the Regge symmetry is generalizable to both types as (\ref{regge_I}, \ref{regge_II}). External symmetries, on the other hand, are less convenient to use, but they provide a lot of new relations that depend on $N$ explicitly and connects 6-j symbols from different $N$.

\section*{Appendix}

In this Appendix we write down explicitly the new symmetries mentioned in select results. Also we prove that external symmetries always preserve the non-negativeness of $\widetilde{r}_n,\widetilde{i},\widetilde{j}$.

\begin{itemize}
	\item Counterpart of the Regge transformations (\ref{regge_I}) in terms of Young diagrams, type I:
	{\small\begin{align}
		\left\lbrace \begin{matrix}
		[r_1] & \overline{[r_2]} & \left[ i, \frac{r_2-r_1+i}{2}^{N-2} \right]\\
		[r_3] & [r_4] & \left[ j, \frac{r_2-r_3+j}{2}^{N-2} \right]
		\end{matrix} \right\rbrace =& \left\lbrace \begin {matrix}
		[r_1]& \overline{\left[\dfrac{r_2+r_4-i+j}{2}\right]} &\left[\dfrac{-r_2+r_4+i+j}{2},\dfrac{r_3-r_2+j}{2}^{N-2}\right]\\  [r_3]&\left[\dfrac{r_2+r_4+i-j}{2}\right]&\left[\dfrac{r_2-r_4+i+j}{2}, \dfrac{r_2-r_3+j}{2}^{N-2}\right]
		\end {matrix} \right\rbrace=\\
		=&\left\lbrace \begin {matrix}
		\left[\dfrac{r_1+r_3-i+j}{2}\right] &\overline{[r_2]}&\left[\dfrac{-r_1+r_3+i+j}{2} , \dfrac{r_2-r_1+i}{2}^{N-2} \right]\\
		\left[\dfrac{r_1+r_3+i-j}{2}\right]& [r_4] &\left[\dfrac{r_1-r_3+i+j}{2}, \dfrac{r_2-r_3+j}{2}^{N-2}\right]
		\end {matrix} \right\rbrace\nonumber.
		\end{align}}
	\item Counterpart of the Regge transformations (\ref{regge_II})  in terms of Young diagrams, type II:
{\small\begin{align}
	\left\lbrace \begin{matrix}
	[r_1] & [r_2] & \left[\frac{r_1+r_2+i}{2}, \frac{r_1+r_2-i}{2} \right]\\
	\overline{[r_3]} & [r_4] & \left[ j, \frac{r_2-r_3+j}{2}^{N-2} \right]
	\end{matrix} \right\rbrace  = & \left\lbrace \begin {matrix}
	[r_1]& \left[\dfrac{r_2-r_4+i+j}{2}\right] &\left[\dfrac{r_1+r_2+i}{2},\dfrac{r_1-r_4+j}{2}\right]\\  \overline{[r_3]}&\left[\dfrac{r_4-r_2+i+j}{2}\right]&\left[\dfrac{r_2+r_4-i+j}{2}, \dfrac{r_2-r_3+j}{2} ^{N-2}\right]
	\end {matrix} \right\rbrace = \\
	= &\left\lbrace \begin {matrix}
	\left[\dfrac{r_1-r_3+i+j}{2}\right] &[r_2]&\left[\dfrac{r_1+r_2+i}{2} , \dfrac{r_2-r_3+j}{2} \right]\\
	\overline{\left[\dfrac{r_3-r_1+i+j}{2}\right]}& [r_4] &\left[\dfrac{r_1+r_3-i+j}{2}, \dfrac{r_2+r_1-i}{2} ^{N-2}\right]
	\end {matrix} \right\rbrace\nonumber.
	\end{align}}

	\item Type I external symmetries (\ref{w1s1}):

	\begin{align}
	\left[ \begin{matrix}
	r_1 & r_2 & i\\
	r_3 & r_4 & j
	\end{matrix} \right]_1^N \cong &\left[ \begin{matrix}
	r_1 & \dfrac{r_2+r_4+i-j}{2} & \dfrac{3r_2+r_4-i-j}{2}\\
	r_3 & \dfrac{r_2+r_4-i+j}{2} & \dfrac{r_2+3r_4-i-j}{2}
	\end{matrix} \right]_1^{N+\frac{i+j-r_2-r_4}{2}} \\\cong &\left[ \begin{matrix}
	i+N-2 & \dfrac{i+j+r_2-r_4}{2} +N-2& \dfrac{i+j+r_1-r_3}{2}\\
	j+N-2 & \dfrac{i+j-r_2+r_4}{2}+N-2 & \dfrac{i+j-r_1+r_3}{2}
	\end{matrix} \right]_1^{2+\frac{r_2+r_4-i-j}{2}}\\ \cong &\left[ \begin{matrix}
	i+N-2 & \dfrac{i+j+r_1-r_3}{2} & \dfrac{i+j+r_2-r_4}{2}+N-2\\
	j+N-2 & \dfrac{i+j-r_1+r_3}{2} & \dfrac{i+j-r_2+r_4}{2}+N-2
	\end{matrix} \right]_1 ^{2+\frac{r_1+r_3-i-j}{2}} \\ \cong&\left[ \begin{matrix}
	r_2+N-2 & \dfrac{r_2+r_4+i-j}{2} +N-2& \dfrac{3r_1+r_3-i-j}{2}\\
	r_4+N-2 & \dfrac{r_2+r_4-i+j}{2}+N-2 & \dfrac{r_1+3r_3-i-j}{2}
	\end{matrix} \right]_1 ^{2+\frac{i+j-r_2-r_4}{2}}\\ \cong &\left[ \begin{matrix}
	r_2+N-2 & r_1+N-2 & i+2N-4\\
	r_4+N-2 & r_3+N-2 & j+2N-4
	\end{matrix} \right]_1^{4-N}.
	\end{align}

	\item Type II external symmetries (\ref{w2s1}):

{\small\begin{align}
	\left[ \begin{matrix}
	r_1 & r_2 & i\\
	r_3 & r_4 & j
	\end{matrix} \right]_2^N \cong &\left[ \begin{matrix}
	\frac{r_3+i-r_1+j}{2} & \frac{r_3+i+r_1-j}{2} & r_4\\
	i & r_3 & \frac{3r_2+i-r_4-j}{2}
	\end{matrix} \right]_2^{N+\frac{r_1+j-r_3-i}{2}}  \\\cong & \left[ \begin{matrix}
	\frac{r_4+i+r_2-j}{2} & \frac{r_4+i-r_2+j}{2} & r_3\\
	r_4 & i & \frac{3r_1+i-r_3-j}{2}
	\end{matrix} \right]_2^{N+\frac{r_2+j-r_4-i}{2}} \\\cong &\left[ \begin{matrix}
	\frac{r_3+j-r_1+i}{2} & \frac{r_3+j+r_1-i}{2}+N-2 & r_2+N-2\\
	j+N-2 & r_3 & \frac{r_4+j+r_2-i}{2}
	\end{matrix} \right]_2^{2+\frac{r_1+i-r_3-j}{2}} \\ \cong &\left[ \begin{matrix}
	\frac{r_4+j+r_2-i}{2}+N-2 & \frac{r_4+j-r_2+i}{2} & r_1+N-2\\
	r_4 & j+N-2 & \frac{r_3+j+r_1-i}{2}
	\end{matrix} \right]_2^{2+\frac{r_2+i-r_4-j}{2}} \\\cong &\left[ \begin{matrix}
	\frac{r_2+i+r_4-j}{2} & \frac{r_2+i-r_4+j}{2}+N-2 & r_1+N-2\\
	r_2+N-2 & i & \frac{3r_3+i-r_1-j}{2}
	\end{matrix} \right]_2^{2+\frac{r_4+j-r_2-i}{2}} \\\cong & \left[ \begin{matrix}
	\frac{r_1+i-r_3+j}{2}+N-2 & \frac{r_1+i+r_3-j}{2} & r_2+N-2\\
	i & r_1+N-2 & \frac{3r_4+i-r_2-j}{2}
	\end{matrix} \right]_2^{2+\frac{r_3+j-r_1-i}{2}} \\\cong & \left[ \begin{matrix}
	r_3+N-2 & r_2 & j+N-2\\
	r_3 & r_2+N-2 & i+r_3-r_1
	\end{matrix} \right]_2^{2+r_1-r_3} \\\cong & \left[ \begin{matrix}
	r_1 & r_4+N-2 & j+N-2\\
	r_1+N-2 & r_4 & i+r_1-r_3
	\end{matrix} \right]_2^{2+r_1-r_3} \\\cong &\left[ \begin{matrix}
	\frac{r_2+j+r_4-i}{2}+N-2 & \frac{r_2+j-r_4+i}{2}+4-N & r_3\\
	r_2+N-2 & j+N-2 & \frac{r_1+j+r_3-i}{2}+2N-4
	\end{matrix} \right]_2^{4-N+\frac{r_4+i-r_2-j}{2}} \\\cong & \left[ \begin{matrix}
	\frac{r_1+j-r_3+i}{2}+2-N & \frac{r_1+j+r_3-i}{2}+2-N & r_4\\
	j+N-2 & r_1+N-2 & \frac{r_2+j+r_4-i}{2}+2N-4
	\end{matrix} \right]_2^{4-N+\frac{r_3+i-r_1-j}{2}} \\\cong &\left[ \begin{matrix}
	r_3+N-2 & r_4+N-2 & i\\
	r_1+N-2 & r_2+N-2 & j+2N-4
	\end{matrix} \right]_2^{4-N}.
	\end{align}}

\end{itemize}

\subsection*{Proof of statement \ref{st5}}
\begin{proof}
	Let us firstly prove that the following expressions are non-negative:
	\begin{align}
	\begin{cases}
	r_1+r_3+i-j \ge 0, \\
	3r_2+r_4-i-j \ge 0,
	\end{cases} \qquad &T\in\{1,2\},\\
	i+r_1-r_3 \ge 0, \quad\qquad &T=2.
	\end{align}
	The non-negativity can be proven using the inequalities on $i,j$. These inequalities are tautological generalization of the $U_q(sl_2)$ case \cite{Alekseev:2019}:
	\begin{align}
	\max\begin{pmatrix}
	|r_1-r_2|\\
	|r_3-r_4|
	\end{pmatrix} \le i \le \min\begin{pmatrix}
	r_1+r_2\\
	r_3+r_4
	\end{pmatrix}, \qquad \max\begin{pmatrix}
	|r_2-r_3|\\
	|r_1-r_4|
	\end{pmatrix} \le j \le \min\begin{pmatrix}
	r_2+r_3\\
	r_1+r_4
	\end{pmatrix}.
	\end{align}
	With the suitable substitution the proof is obvious:
	\begin{equation}
	r_1+r_3+i-j \ge r_1+r_3+(-r_3+r_4)-(r_1+r_4) \ge 0,
	\end{equation}
	\begin{equation}
	3r_2+r_4-i-j \ge 3r_2+r_4-(r_1+r_2)-(r_2+r_3) \ge r_2+r_4-r_1-r_3 = 0,
	\end{equation}
	\begin{equation}
	i+r_1-r_3 \ge \max\begin{pmatrix}
	r_2-r_1+r_1-r_3\\
	r_3-r_4+r_1-r_3
	\end{pmatrix} =\max\begin{pmatrix}
	r_2-r_3\\
	r_1-r_4
	\end{pmatrix} =\max\begin{pmatrix}
	r_2-r_3\\
	r_3-r_2
	\end{pmatrix} \ge 0.
	\end{equation}
	Similarly one can derive non-negativeness of all expressions from external symmetries. Since the derivation is analogous in these cases, they are omitted. 
\end{proof}

\section*{Acknowledgments}
We are deeply indebted to Andrei Mironov and Alexei Morozov for numerous stimulating discussions. V.A. is also grateful to Satoshi Nawata for clarifications on fusion rules, to Andrei Zotov and Victor Mishnyakov for useful discussions and comments.

Our work was partly supported by the grant of the Foundation for the Advancement of Theoretical Physics ``BASIS" (A.M., A.S. and A.V.), by RFBR grants 19-01-00680 (V.A.), 17-01-00585 (A.M.), 18-31-20046 (A.S.), by joint RFBR grants 19-51-18006 (A.M.), 19-51-50008-Yaf-a (A.M.), 18-51-05015-Arm-a (A.M, A.S.), 18-51-45010-Ind-a (A.M, A.S.), 19-51-53014-GFEN-a (A.M, A.S.), 19-51-18006-Bolg-a (A.M.), by President of Russian Federation grant MK-2038.2019.1 (A.M.).

\bibliographystyle{unsrturl}
\bibliography{Hyp_geom_syms_9}{}

\end{document}